\documentclass{fundam}

\usepackage{graphicx}
\usepackage{makeidx}
\usepackage{latexsym} 
\usepackage{url}

\newcommand{\msum}{\mbox{
  \begin{picture}(5,11)
  \put(-4.6,0){ $\cup$}
  \put(-1,1.5){\tiny $\cup$} 
  \end{picture}
  }}

\newcommand{\psum}{\mbox{
  \begin{picture}(5,11)
  \put(-4.6,0){ $\cup$}
  \put(-1,1.9){\tiny $+$} 
  \end{picture}
  }}

\newcommand{\pa}{\mbox{
  \begin{picture}(3,11)(3,0)
  \put(2,0){\small $\parallel$}
  \end{picture}
  }}

\newcommand{\konk}{
   \begin{picture}(8,11)(0,0)
    \put(-2,0){ $\circ$}
   \end{picture}
   }

\newtheorem{Remark}{\sc Remark}[section] 

\newtheorem{conjecture}[Remark]{\bf Conjecture}

\begin{document}

\setcounter{page}{63}
\publyear{22}
\papernumber{2142}
\volume{188}
\issue{2}

   \finalVersionForARXIV

\title{On Completeness of Cost Metrics and Meta-Search Algorithms\\ in \$-Calculus}

\author{Eugene Eberbach\thanks{retired Professor of Practice, RPI Hartford, CT. This paper has been written in memories
            of Professor Peter Wegner from Brown University and Professor Mark Burgin from University of California Los Angeles
            with whom I had the honor and pleasure to cooperate for several years.}\thanks{Address for correspondence: Department of Engineering and Sci.,  Rensselaer Polytechnic Institute, Hartford, CT, USA.  \newline \newline
                    \vspace*{-6mm}{\scriptsize{Received July 2021; \ accepted February  2023.}}}
\\
Department of Engineering and Science\\
Rensselaer Polytechnic Institute, Hartford, CT, USA\\
eeberbach@gmail.com
}
\maketitle

\runninghead{E. Eberbach}{On Completeness of Costs and Meta-Search}

\vspace*{-6mm}
\begin{abstract}
In the paper we define three new complexity classes for Turing Machine undecidable problems inspired by the famous Cook/Levin's NP-complete complexity class for intractable problems. These are U-complete (Universal complete), D-complete (Diagonalization complete) and H-complete (Hypercomputation complete) classes. In the paper, in the spirit of Cook/Levin/ Karp, we started the population process of these new classes assigning several undecidable problems to them. We justify that some super-Turing models of computation, i.e., models going beyond Turing machines, are tremendously expressive and they allow to accept arbitrary languages over a given alphabet including those undecidable ones. We prove also that one of such super-Turing models of computation - the \$-Calculus, designed as a tool for automatic problem solving and automatic programming, has also such tremendous expressiveness. We investigate also completeness of cost metrics and meta-search algorithms in \$-calculus.

\medskip\noindent
\textbf{Keywords:}
automatic problem solving, automatic programming, undecidability, intractablity, recursive algorithms, recursively enumerable but not recursive algorithms, non-recursively enumerable algorithms, super-Turing computation, super-recursive algorithms, p-decidability, e-decidability, a-decidability, i-decidability, reduction techniques, U-completeness, D-completeness, H-completeness, \$-calculus, cost metrics completeness, meta-search algorithms completeness
\end{abstract}

\section{Introduction}

This paper is on problem solving.  If problem solving was everything what humans were doing in their life, then the paper would be on everything. However, it is obvious that problem solving although dominating our activities and very important does not cover all human activities. Thus the paper for sure is not on the theory of everything (an unachievable goal), but at most on its approximation. Note also that economy, evolutionary computing, mathematics and Turing machines (TMs), although very general and powerful also fail as theory of everything. This has been stressed and reminded again in Conclusion section.

Problems can be either solvable or unsolvable (called also undecidable) using a specific model/ theory. In computer science, Turing machines form such dominating and most popular model for problem solving.
Some problems are Turing machine solvable and some not. In particular, our paper intends to provide a new approach to deal with Turing machine undecidable problems. The typical belief is that  proving that a specific problem is TM-undecidable stops any attempt to solve that problem and that is the end of the story. On the other hand, we are convinced that this is only the beginning. First of all, we may decide special instances (or perhaps even almost all instances) of the undeciable problem. For example, if we have the probability distribution of input instances, perhaps randomized techniques may help to estimate which inputs are decidable. Secondly, we can approximate the solutions and we may decide the specific instances either in a finite number of steps or asymptotically in the infinity. There are other approaches possible to deal with undecidability too (see, e.g., the infinity, evolution and interaction principles from next section), and all above looks like an excellent and exiting new venue for many years of fruitful research to come.

\medskip
This paper is organized as follows. In section 2, we overview some basic notions  related to Turing machine
problem solving and computations going beyond Turing machines.
We call that a super-Turing computation or hypercomputation. In particular, we introduce 3 complexity classes in a new way compared to first attempt to do so in \cite{eber15}, and inspired by the famous Cook/Levin NP-complete complexity class. We present some examples of TM-undecidable problems and several super-Turing models of computation. We justify that some of such models allow to accept arbitrary languages (including those undecidable ones) over a given alphabet. In section 3, we outline the
\$-calculus super-Turing model of computation based on cost directed $k\Omega$-search as a tool for automatic problem solving and automatic programming. In section 4, we prove the expressiveness of \$-calculus. In section 5, we investigate the problem of completeness of cost functions, and in section 6 - the completeness of meta-search algorithms.
Section 7 contains conclusions and problems to be solved in the future.

\section{Super-Turing computation in capsule}

Turing Machines \cite{turing36,turing39} and algorithms are two fundamental concepts of computer science
and problem solving. Turing Machines describe the limits of problem solving using conventional recursive algorithms,
and laid the foundation of current computer science in the 1960s.
TMs made imprecise definitions of algorithms mathematically more precise and formal and they led to Kleene's Turing Thesis that every algorithm can be represented in the form of Turing machine.
\eject

Note that there are several other models of algorithms, called super-recursive algorithms, that can compute more than Turing Machines, using hypercomputational/superTuring models of computation \cite{burgin05,syropoulos08}.

It turns out that (TM) {\em undecidable problems} cannot
be solved by TMs and {\em intractable problems} are solvable,
but require too many resources (e.g., steps or memory). For undecidable problems effective
recipes do not exist -
they are covered by several classes of nonrecursive algorithms
(two outer rings from Figure 1).
 On the other hand, for intractable problems algorithms exist, but
running them on a deterministic Turing Machine,
requires an exponential
amount of time (the number of elementary moves of the TM)
as a function of the TM input.

\begin{figure}[!h]
\vspace*{-2mm}
\begin{center}
\includegraphics[width=11.6cm]{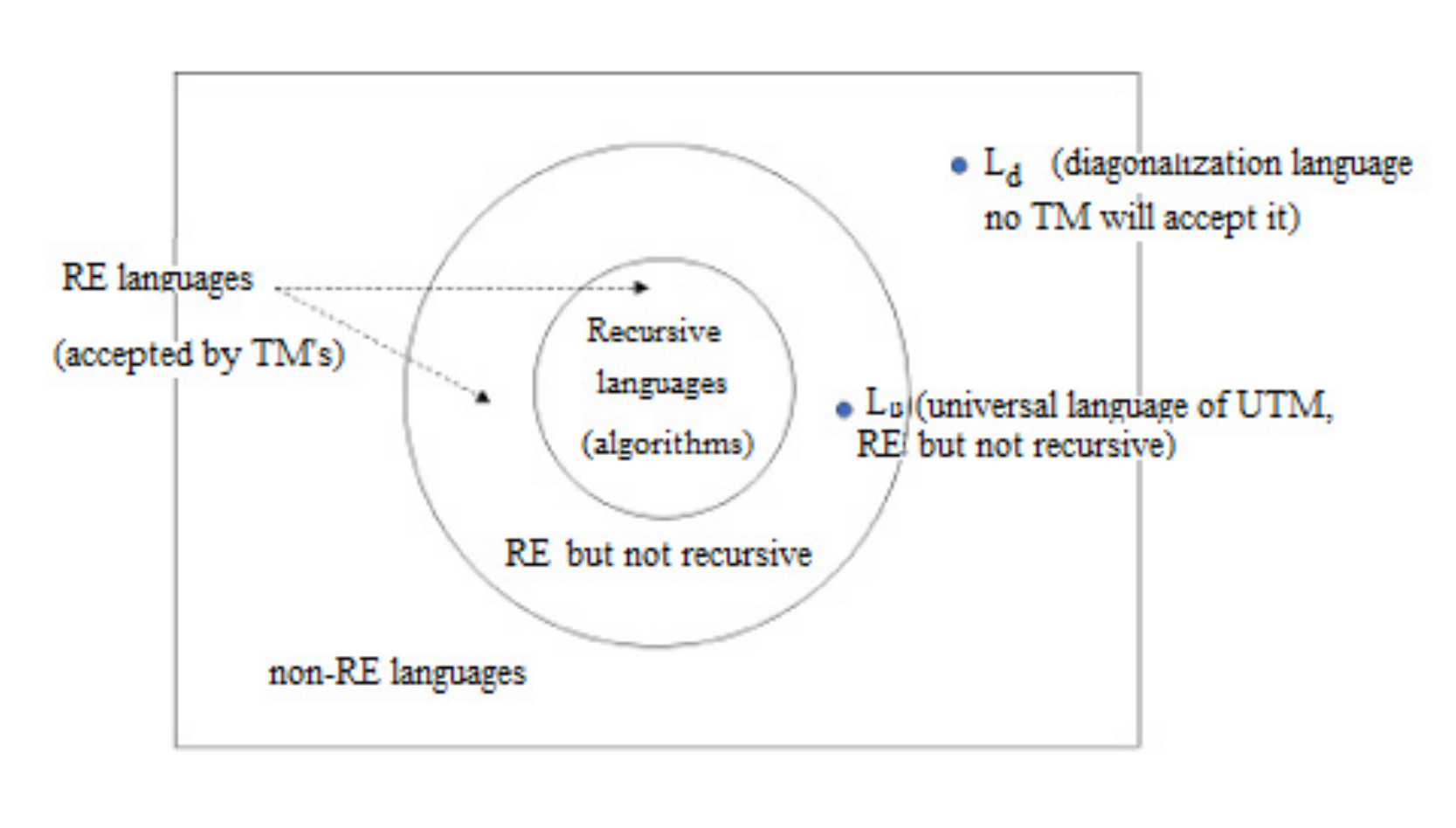}
\end{center}\vspace*{-11mm}
\caption{Relation between Recursive, Re and non-Re Languages}\vspace*{2mm}
\end{figure}

We use the simplicity of the TM model to prove formally that there are specific
problems (languages) that the TM cannot solve \cite{hopcroft01}.
Solving the problem is equivalent to decide whether a string belongs to the language.
A problem that cannot be solved by computer (Turing machine) is called
{\em undecidable} (TM-undecidable).
The class of languages accepted by Turing machines are called
{\em recursively enumerable (RE-) languages}.
For RE-languages, TM can accept the strings in the language but cannot tell
for certain that a string is not in the language.

\medskip
There are two classes of Turing machine unsolvable languages (problems):
\begin{description}
\item{\em recursively enumerable but not recursive (RE-nonREC)}  - TM  can accept the strings in the language
but cannot tell for certain that a string is not in the language (e.g., the language of the universal
Turing machine, or Post's Correspondence Problem languages).
A language is decidable but its complement is undecidable.
\item{\em non-recursively enumerable (non-RE)} - no TM can even recognize the members of the language
in the RE sense (e.g., the diagonalization language). Neither a language nor
its complement is decidable, or a language is undecidable
but its complement is decidable.
\end{description}

Decidable problems have a (recursive) algorithm, i.e., TM halts whether or not it accepts its input.
Decidable problems are described by {\em recursive languages}.
Recursive algorithms as we know are associated with the class of recursive languages,
a subset of recursively enumerable languages for which we can construct its accepting TM.
For recursive languages, both a language and its complement are decidable.

Turing Machines are used as a formal model of classical (recursive)  algorithms.
An algorithm should consist of a finite number of steps, each having well defined
and implementable meaning.
We are convinced that computer computations are not restricted to such restrictive
definition of algorithms only. If we allow for an infinite number of steps (e.g., reactive programs)
and/or not well defined/implementable meaning
of each step (e.g., an Oracle), we are in the class of super-recursive algorithms \cite{burgin05}.

In \cite{denning11}, Peter Denning expressed concerns that developments in non-terminating computation, analog computation, continuous computation, and natural computation may require rethinking the basic definitions of computation. It is further stated that computation is the process that the machine or algorithm generates. In analogies: the machine is a car, the desired outcome is the driver's destination, and the computation is the journey taken by the car and driver to the destination.

The Turing machine model is not the most convenient for many domains even though theoretically it is equivalent with every convenient model in its ability to define functions. In solving real problems, we work with computational models (or invent new ones) that are convenient and appropriate for the domain \cite{denning11,aho11}.

Such new types of computation and computational models are called very often hypercomputation (or super-Turing computation) and hypercomputational (or super-Turing) models of computation.

\begin{definition}[On super-Turing computation]
By {\em super-Turing computation} (also called {\em hypercomputation}) we mean any computation that cannot be carried out
by a Turing machine as well as any (algorithmic) computation carried out by a Turing
machine.
\end{definition}

The above definition is consistent with \cite{syropoulos08,burgin05}) (see also Wikipedia's definition for comparison) that states  that hypercomputation or super-Turing computation refers to models of computation that go beyond, or are incomparable to, Turing computability. This includes various hypothetical methods for the computation of non-Turing-computable functions, following super-recursive algorithms (see \cite{syropoulos08,burgin05}).

\medskip
Super-Turing  models derive their higher than the TM expressiveness using three principles {\em interaction, evolution}, or {\em infinity}:
\begin{itemize}
\itemsep=0.9pt
\item
In the {\em interaction principle} the model becomes open and the agent
interacts with either a more expressive component or with an infinite many components.
\item
In the {\em evolution principle}, the model can evolve to a more expressive one
using non-recursive variation operators.
\item
In the {\em infinity principle}, models can use unbounded resources:
time, memory, the number of computational elements, an unbounded initial configuration,
an infinite alphabet, etc.
\end{itemize}
The details can be found in \cite{eber03c,eber04a}.
\eject

\begin{definition}[On time/space complexity]
We can define four classes of problems/languages to decide strings in the language in the order of increasing hardness and complexity:

\begin{enumerate}
{\em
\item
{\bf p-decidable problems:} The number of steps/memory cells is polynomial in the problem size and problems will be called {\em polynomial decidable}
({\em p-decidable}).
\item
{\bf e-decidable problems:} The number of steps/memory cells is exponential  in the problem size and problems will be called {\em exponentially decidable}
({\em e-decidable}).
\item
{\bf a-decidable problems:} The number of steps/memory cells is infinite but computed in finite time/finite number of cells, i.e., {\em asymptotically/limit
decidable} ({\em a-decidable})
(analogy: convergent infinite series,
mathematical induction, computing infinite sum in definite integral).
\item
{\bf i-decidable problems:} The number of steps/memory cells is infinite and requires infinite time/ infinite number of cells
to decide strings in the problem size, i.e., {\em infinitely decidable} ({\em i-decidable}) (undecidable in the finite sense).
}
\end{enumerate}
\end{definition}
The classical complexity theory usually covers classes (1) as easy/tractable and class (2) as intractable problems.
Class (3) is
an intermediate class because although technically it requires an infinite number of steps
(or memory cells for space complexity), we can find
a solution in the limit using finite resources. It is represented by convergent in infinity subset of
Inductive Turing Machines,
anytime algorithms,
evolutionary algorithms, or \$-calculus.
Class (4) requires infinite resources and is unsolvable by Turing Machine
(but solvable by hypercomputers using infinite resources). Classes (2) , (3) and (4) cover non-polynomial algorithms,
classes (3) and (4)
belong to super-recursive algorithms \cite{burgin05} (see also Wikipedia). Class (1) and (2) cover recursive algorithms.

Obviously, undecidable problems are characterized by computations growing faster than exponentially (i.e., hyperexponentially) or they may have even an infinite computational complexity (e.g., an enumerable or real numbers infinity type).

Note that we have in reality an infinite hierarchy of infinities/cardinalities in mathematics/set theory \cite{kuratowski77}, whereas computer science considers only typically enumerable infinity (denoted by  $\omega$, $\alpha$   or  $\aleph_0$) with some extension to true real numbers (denoted by $c$ or $\aleph_1$) represented by analog computers, neural networks operating on real numbers or evolution strategies operating on vectors of real numbers.

Traditional complexity theory deals with various type of decidable algorithms (see e.g., \cite{kleinberg06}), i.e., graph, greedy, divide and conquer, dynamic programming, network flow, NP-complete, PSpace, approximation, local search, and randomized algorithms. All of them have polynomial or exponential complexities. Algorithms than run forever \cite{kleinberg06} and super-recursive algorithms \cite{burgin05} require an extension of complexity theory to infinite cases (nobody did it so far for enumerable $\aleph_0$  or non-enumerable infinities $\aleph_1$, $\aleph_2$, $\aleph_3$,...). For example, local search, also known as hill climbing, typically allows to find local optima only. However, evolutionary algorithms if the search is complete and uses elitist selection allows in infinity to reach global optimum. Then, we do not solve a given problem in polynomial time nor in exponential time, but, perhaps, in infinite time (classes a-decidable and i-decidable of algorithms). But if infinite, what type of infinity we are talking about, $\aleph_0$ or $\aleph_1$?  Or something else?

It is necessary to make a distinction between recursively undecidable problems and super-recursively undecidable problems. At this moment, the author hypothesizes that H-complete class covers super-recursively undecidable problems. Whether all of them? Probably not. This is an open research problem. The author's belief is that assuming that Figure 1 covers all possible problems then Turing machine  recursively undecidable problems belong to two outer rings, and super-recursively undecidable problems belong only to the outer ring.

Note also that the granularity of outer ring in Figure 1 is not sufficient, because currently contains both D-complete, H-complete and complement of U-complete languages. In the future, the outer ring has to be partitioned further.

\medskip
In \cite{eber03c,eber04a,syropoulos08,pesonen11}, several super-Turing models have been discussed and overviewed. An incomplete list includes:
\begin{itemize}
\itemsep=0.95pt
\item {\em Turing's o-machines, c-machines and u-machines} (Turing A.) - they use help of Oracle (o-machines) or human operator (c-machines),
or they form an unorganized neural network that may evolve by genetic algorithms or reinforcement learning (u-machines),
\item
{\em Cellular automata} (von Neumann J.) - an infinite number of discrete finite automata cells in a regular grid,
\item
{\em Discrete and analog neural networks} (Garzon M., Siegelmann H.) - a potentially infinite number of discrete neurons or neurons with true real-valued inputs/outputs,
\item
{\em Interaction Machines} (Wegner P.) - they interact with other machines sequentially or in parallel by infinite multiple streams of inputs and outputs,
\item
{\em Persistent Turing Machines} (Goldin D.) - they preserve contents of memory tape from computation to computation,
\item
{\em Site and Internet Machines} (van Leeuwen J., Wiedermann J.) - they have input/output ports that allow to interact with an environment or Oracle and communicate by infinite streams of messages,
\item
{\em The $\pi$-calculus} (Milner R.) - potentially an infinite number of agents interacting in parallel by message-passing,
\item
{\em The \$-calculus} (Eberbach E.) - potentially an infinite number of agents interacting in parallel by message-passing and searching for solutions by built-in $k\Omega$-optimization meta-search that may evolve,
\item
{\em Inductive Turing Machines} (Burgin M.) - they may continue computation after providing the results in a finite time,
\item
{\em Infinite Time Turing Machines} (Hamkins J.D.) - they allow an infinite number of computational steps,
\item
{\em Accelerating Turing Machines} (Copeland B.J.) - each instruction requires a half of the time of its predecessor's time forming a geometric convergent series,
\item
{\em Evolutionary Turing Machines} (Eberbach E.) and {\em Evolutionary Automata} (Eberbach E., Burgin M.) - they use an infinite chain of abstract automata that may evolve in successive generations and communicate by message-passing (an output becomes an input to a next generation).
\end{itemize}

It is interesting that we have more undecidable/unsolvable problems (represented by the cardinality of real numbers - an uncountable infinity) than decidable ones (represented by the cardinality of natural numbers - an enumerable infinity). This is caused by the fact that decidable problems are modeled by Turing Machines and we have only an infinte enumerable number of Turing Machines possible (see, e.g., \cite{hopcroft01}).

The new complexity classes will be defined in the order of growing undecidability. Note that we concentrated on time computational complexities inspired by NP-complete problems class. An analogous classification can be provided based on memory computational complexities, e.g., inspired by PSPACE-complete problems class.

\medskip
Before we will define 3 new complexity classes for Turing Machine undecidable problems, we will present a few examples of typical unsolvable problems.

The Universal TM simulates the work of arbitrary TM. Its language is RE but not recursive.
\begin{definition}[On the Universal TM Language]
The {\em universal TM language} $L_u$ (of the Universal Turing Machine) UTM $U$ consists of the set
of pairs $(M,w)$, where $M$ is a binary encoding of TM and $w$ is its binary input. The UTM $U$ accepts $(M,w)$ iff
TM $M$ accepts $w$ \cite{hopcroft01}.
\end{definition}

The diagonalization language $L_d$ is an example of the language that is believed
even more difficult in solvability than $L_u$, i.e., language of UTM accepting words $w$ for arbitrary
TM $M$. $L_d$ is non-RE, i.e., it does not exist any TM accepting it.

\begin{definition}[On the Diagonalization Language]
The {\em diagonalization language} $L_d$  consists of all strings $w$ such that TM $M$ whose
code is $w$ does not accept when given $w$ as input \cite{hopcroft01}.
\end{definition}

The existence of the diagonalization language that cannot be accepted by any TM
is proven by the diagonalization table with
``dummy'', i.e., not real/true values. Of course, there are many diagonalization
language encodings possible that depend how transitions of TMs are encoded.
This means that there are infinitely many different $L_d$ language instances (but
nobody wrote a specific example of $L_d$).
Solving the halting problem of UTM, can be used for a ``constructive''
proof of
the diagonalization language i-decidability demonstrating all strings belonging to the language.

\begin{definition}[On Nonempty and Empty TM Languages]
The {\em Nonempty TM language} $L_ne$ consisting of all binary encoded TMs whose language is not empty, i.e., $L_{ne} = \{ M\; | \;L(M) \neq \emptyset \}$ is known to be recursively enumerable but not recursive, and its complement - the {\em Empty TM language} $L_e = \{ M\; | \;L(M) =  \emptyset \}$ consisting of all binary encoded TMs whose language is empty is known to be non-recursively enumerable \cite{hopcroft01}.
\end{definition}

\begin{definition}[The Post Correspondence Problem (PCP)]
The TM undecidable {\em Post Correspondence Problem (PCP)} \cite{post46} asks, given two lists of the same number of strings over the same alphabet, whether we can pick a sequence of corresponding strings from the two lists and form the same string by concatenation.
\end{definition}

\begin{definition}[On Busy Beaver Problem (BBP)]
The TM undecidable {\em Busy Beaver Problem (BBP)} \cite{rado62} considers a deterministic 1-tape
Turing machine with unary alphabet $\{1\}$ and tape alphabet $\{1,B\}$, where $B$
represents  the tape blank symbol. TM starts with an initial empty tape and accepts by
halting. For the arbitrary number of states $n=0,1,2,...$ TM tries to compute  two functions:
the maximum number of 1s written on tape before halting (known as the busy beaver
function $\Sigma(n))$, and the maximum number of steps before halting (known as the
maximum shift function $S(n))$.
\end{definition}

In \cite{bringsjord12}, the author (inspired by the recent famous or infamous - pending on the point of view) the world economy crisis from 2008) introduced, partially for ``fun'', two other undecidable problems, related to BBP, namely the Economy Collapse Problem (ECP) and the Economy Immortality Problem (EIP).

\begin{definition}[The Economy Collapse Problem (ECP)]
Let $ECP(n)$ be the maximum amount of time for which any economy with $n$ states can function without collapsing \cite{bringsjord12}.  Here collapse can be equated with the corresponding Turing machine $M$
halting when started on a blank tape.  Compute $ECP(n)$ for
arbitrary values of $n$.
\end{definition}

\begin{definition}[The Economy Immortality Problem (EIP)]
Let $EIP$ represent economies that never collapse, i.e., are
immortal \cite{bringsjord12}.  For arbitrary values of $n$ decide whether any economy with $n$ states is immortal.
\end{definition}

\subsection{Undecidability complexity classes: U-complete, D-complete and H-complete problems}

Now we are ready to introduce 3 new classes of TM undecidable problems, inspired by the NP-complete class definition.

\begin{definition}[On U-complete languages]
We say a language $L$ is {\em U-complete (Universal Turing Machine complete)} iff
\begin{enumerate}
\item
Any word $w$ can be decided in a finite number of steps if  $w \in L$, or it requires an infinite number of steps if  $w \notin  L$ (semi-decidability condition).
\item
For any language $L'$ satisfying (1) there is p-decidable, or e-decidable reduction of $L'$ to $L$ (completeness condition).
\end{enumerate}
\end{definition}

\begin{corollary}
U-complete languages belong to the {\em RE-nonREC} class.
\end{corollary}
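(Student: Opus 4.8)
The plan is to prove the two halves of ``$L \in$ RE-nonREC'' separately: that every U-complete $L$ is recursively enumerable, and that no U-complete $L$ is recursive. The first half is essentially a reading of the semi-decidability condition~(1). Condition~(1) asserts that there is a procedure that halts and accepts exactly on the words of $L$ and runs forever on every word not in $L$; this is precisely the definition of an RE language. (Conversely, every RE language satisfies~(1): take any acceptor and replace each rejecting halt by an infinite loop.) So from~(1) alone we obtain $L\in$ RE, and the only thing to be careful about here is the interpretation of ``finite number of steps'' as finitely many steps of an ordinary effective procedure, so that the resulting notion coincides with classical recursive enumerability.

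For the second half I would use the completeness condition~(2) together with the universal Turing machine language $L_u$. First I would verify that $L_u$ itself satisfies~(1): $L_u$ is RE (it is accepted by the UTM $U$ of the corresponding definition), hence by the observation above it admits a semi-decider that diverges on every pair $(M,w)\notin L_u$, so~(1) holds for $L_u$. By~(2) there is then a computable (indeed p- or e-decidable) many-one reduction $f$ of $L_u$ to $L$, i.e. $(M,w)\in L_u \iff f(M,w)\in L$. Now I invoke the standard downward closure of recursiveness under $\le_m$: if $L$ were recursive, composing a decider for $L$ with $f$ would decide $L_u$, contradicting the fact — recalled earlier in the excerpt — that $L_u$ is RE but not recursive. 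Hence $L$ is not recursive.

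Combining the two halves, $L$ is recursively enumerable but not recursive, which is exactly the assertion $L\in$ RE-nonREC. The main obstacle, though a mild one, is purely a matter of fixing conventions: one must confirm that condition~(1) is genuinely equivalent to RE-ness (handling semi-deciders that halt-and-reject on non-members by converting rejection into divergence) and that ``p-/e-decidable reduction'' is meant as a computable many-one reduction, so that the closure argument applies. Once those points are pinned down, the proof reduces to the two implications ``(1) $\Rightarrow$ $L$ RE'' and ``(2) applied to $L' = L_u$, together with $L_u\notin$ REC, $\Rightarrow$ $L\notin$ REC''.
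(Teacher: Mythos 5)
Your proof is correct and is exactly the argument the paper leaves implicit: the paper states this corollary without proof, and the intended justification is precisely that condition~(1) is the definition of recursive enumerability while condition~(2), instantiated with $L'=L_u$ (which the paper recalls is RE but not recursive), rules out recursiveness via downward closure of decidability under computable reductions. Your two caveats about conventions (semi-decidability vs.\ RE, and reading ``p-/e-decidable reduction'' as a computable many-one reduction) are the right ones to pin down and do not affect the conclusion.
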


Examples of U-complete languages include $L_u$ (a basic representative to call the whole class), PCP, $L_{ne}$, BBP, ECP, planning problem, optimization problem.
\eject

The {\em U-hard languages}, a superset of U-complete languages, satisfy only completeness condition from the above definition.

Note that U-complete semi-decidable languages resemble NP-complete class - if you know the solution (the string in the language) you can decide in a finite number of steps that the string is accepted, similar like in NP class you can decide in a polynomial time about string acceptability. For this reason it starts a hybrid  ``easiest class'' in the hierarchy of TM undecibale problems, followed by D-complete and H-complete languages.

Note also that proving that any member of U-complete class is decidable, it will break down the undecidability of all remaining members of U-complete class (similar like it is with the NP-complete class).

\begin{definition}[On D-complete languages]
We say a language L is {\em D-complete (Diagonalization complete)} iff
\begin{enumerate}
\itemsep=0.95pt
\item
Any word $w$ from $L$ cannot be decided in a finite number of steps (undecidability condition).
\item
For any language $L'$ satisfying (1) there is p-decidable, or e-decidable reduction of $L'$ to $L$ (completeness condition).
\end{enumerate}
\end{definition}

\begin{corollary}
D-complete languages belong to the {\em non-RE} class.
\end{corollary}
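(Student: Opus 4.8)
The plan is to reduce everything to two facts already in hand: that the diagonalization language $L_d$ is non-RE (proved by the usual diagonalization-table argument recalled above) and that recursive enumerability is preserved downward under TM-computable many-one reductions. First I would check that $L_d$ is a legitimate instance of the quantifier in condition (2) of D-completeness, i.e. that $L_d$ satisfies the undecidability condition (1): since no TM recognises the members of $L_d$ in the RE sense, no word of $L_d$ can be certified in a finite number of steps, so $L_d$ is one of the languages $L'$ to which the completeness condition applies.

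Given a D-complete language $L$, condition (2) then supplies a p-decidable (or e-decidable) reduction $f$ of $L_d$ to $L$, that is, a TM-computable map with $w \in L_d \iff f(w) \in L$. Now suppose, toward a contradiction, that $L$ is RE, witnessed by a machine $M$ that halts and accepts exactly on the members of $L$. Composing $f$ with $M$ yields a machine that halts and accepts exactly on the members of $L_d$, so $L_d$ would be RE --- contradicting its non-RE-ness. Hence no such $M$ exists, $L$ is not recursively enumerable, and therefore $L$ sits in the non-RE region (the outer ring of Figure 1). The same conclusion can be reached directly from condition (1) alone: if $L$ were RE, its acceptor would halt in finitely many steps on any member $w \in L$, deciding ``$w \in L$'' in finite time against condition (1); there one only needs $L \neq \emptyset$, which again follows from completeness since $L_d \neq \emptyset$ must map into $L$.

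I do not anticipate a substantive obstacle; the argument is essentially an unfolding of definitions together with the standard closure property of RE languages under many-one reductions. The one point that needs care --- and the reason the corollary is stated rather than left to the reader --- is the precise reading of ``cannot be decided in a finite number of steps'': for words inside $L$ this has to be understood as the failure of semi-decidability (no halting accepter on the members of $L$), which is exactly the gap between the D-complete case and the U-complete case of the previous corollary, where members \emph{are} accepted in finite time. Once that reading is fixed, both routes above go through verbatim.
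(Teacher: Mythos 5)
The paper states this corollary without any proof at all---it is presented as an immediate consequence of condition~(1) in the definition of D-completeness, under the paper's informal reading that ``any word $w$ from $L$ cannot be decided in a finite number of steps'' just means there is no halting accepter for the members of $L$, i.e.\ that $L$ fails to be RE by definition. Your second, ``direct'' route is exactly that intended one-liner, and your closing remark correctly identifies the only real content: fixing the reading of condition~(1) as failure of semi-decidability (as opposed to the U-complete case, where members \emph{are} accepted in finite time). Your first route---instantiating the completeness condition at $L' = L_d$, using that $L_d$ is non-RE, and invoking downward closure of RE under computable many-one reductions---is a genuinely different and more robust argument: it does not depend on the contested reading of condition~(1) applied to $L$ itself, it disposes of the degenerate case $L = \emptyset$ (which is RE but cannot receive a reduction from the nonempty $L_d$), and it mirrors the standard NP-hardness template the paper is emulating. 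The one soft spot is your claim that $L_d$ satisfies condition~(1) ``since no TM recognises the members of $L_d$ in the RE sense, no word of $L_d$ can be certified in a finite number of steps''---for an individual word certification is always trivially possible, so this step again silently relies on the language-level reading of condition~(1); since you flag and fix that reading explicitly at the end, the argument goes through, and either of your two routes suffices.
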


Examples of D-complete languages include $L_d$ (a basic representative to call the whole class), $L_e$, EIP, complement of $L_d$, complement of $L_u$, complement of BBP.

The {\em D-hard languages}, a superset of D-complete languages, satisfy only completeness condition from the above definition.

\begin{definition}[The hyper-diagonalization language]
The {\em hyper-diagonalization language} $L_{hd}$ consists of all strings $w$ such that TM $M$ whose code is $w$ will not  accept even in an infinite number of steps when given $w$ as input.
\end{definition}

\begin{definition}[On H-complete languages]
We say a language $L$ is {\em H-complete (Hypercomputation complete)} iff
\begin{enumerate}
\itemsep=0.95pt
\item
Any word $w$ from or outside of $L$ cannot be decided in an infinite number of steps (hyper-undecidability condition).
\item
For any language $L'$ satisfying (1) there is an a-decidable or i-decidable reduction of $L'$ to $L$ (completeness condition).
\end{enumerate}
\end{definition}

\begin{corollary}
H-complete languages belong to the {\em non-RE} class.
\end{corollary}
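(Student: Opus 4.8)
The plan is to derive the containment directly from the hyper-undecidability condition~(1) in the definition of H-complete languages, without invoking the completeness condition~(2) at all. First I would recall the formal meaning of membership in RE: a language $L$ is recursively enumerable exactly when some Turing machine $M$ satisfies $L(M)=L$, and this forces every $w\in L$ to be accepted by $M$ after a \emph{finite} number of moves --- equivalently, each positive instance $w\in L$ can be decided in finitely many steps.

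Next I would note that condition~(1) for an H-complete language $L$ is strictly stronger than the undecidability condition used for D-complete languages: it asserts that no word $w$, whether $w\in L$ or $w\notin L$, can be decided even with an infinite number of steps, hence in particular not with a finite number of steps. Combining the two observations, if $L$ were RE then the recognizing machine $M$ would decide each $w\in L$ in finitely many steps, contradicting condition~(1). Therefore no H-complete language is RE, i.e., every H-complete language lies in the \emph{non-RE} class, which is the statement of the corollary. This is essentially the argument used to prove the earlier corollary for D-complete languages, now applied to the (stronger) H-complete hypothesis. One may add the sharper remark that, because condition~(1) is symmetric in ``$w\in L$'' and ``$w\notin L$'', both $L$ and its complement fail to be RE, so H-complete languages sit in the outermost ring of Figure~1.

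The proof is short because the substantive work has been absorbed into the definitions; the one place that needs care is making the bridge between the informal phrase ``decided in a finite (resp.\ infinite) number of steps'' and the formal notion of a Turing machine halting in an accepting state, so that the clash with RE-ness is genuine rather than merely terminological. I would state this correspondence explicitly --- acceptance by an RE-recognizer \emph{is} a finite accepting computation --- after which the inclusion in non-RE is immediate, and no property of the reductions in condition~(2) is needed for this direction.
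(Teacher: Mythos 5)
Your argument is correct and follows exactly the route the paper intends: the corollary is stated without proof as an immediate consequence of the hyper-undecidability condition (1), and you have simply spelled out that consequence --- no finite accepting computation can exist for any $w\in L$, so no Turing machine recognizes $L$. Your added observation that condition (1) is symmetric, so the complement of $L$ is likewise non-RE, is a harmless sharpening consistent with the paper's placement of such languages in the outer ring of Figure~1.
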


A canonical representative of this class is $L_{hd}$ .

The {\em H-hard languages}, a superset of H-complete languages, satisfy only completeness condition from the above definition.

\subsection{Expressiveness of Super-Turing Models of computation}

This subsection on expressiveness of Super-Turing Models of computation is based on \cite{eber22}.

\begin{definition}[On terminal mode]
A word w is accepted in the {\em terminal mode} of the automaton $E$ if given the word $ w$ as input to the automaton $E$, there is a number $n$ such that the automaton $A[n]$ from $E$ comes to an accepting state.
\end{definition}

\begin{definition} [On terminal languages]
The terminal language $TL(E)$ of the automaton $E$ is the set of all words accepted in the terminal mode of the automaton $E$.
\end{definition}

In \cite{burgin12,wegner12} it has been proven that Evolutionary Automata (e.g., Evolutionary Turing Machines or Evolutionary Finite Automata) and Interaction Machines accept arbitrary languages over a given alphabet.

\begin{theorem} Terminal languages of Evolutionary Automata and Interaction Machines coincide with the class of all languages in the alphabet $X$.
\end{theorem}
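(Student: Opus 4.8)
\medskip\noindent\textbf{Proof plan.}
The plan is to prove the two inclusions separately, the first being trivial and the second carrying all the content. For ``$\subseteq$'': by Definitions on terminal mode and terminal languages, $TL(E)$ is by construction a set of words over $X$, hence a language in the alphabet $X$, and likewise for Interaction Machines. So the only real work is to show that \emph{every} language $L\subseteq X^{*}$ arises as the terminal language of some Evolutionary Automaton and of some Interaction Machine.

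For the Evolutionary Automaton side, fix an arbitrary $L\subseteq X^{*}$. Since $X^{*}$ is countably infinite, I would first fix an enumeration $X^{*}=\{v_{0},v_{1},v_{2},\dots\}$ and then define an Evolutionary Automaton $E$ whose $n$-th generation $A[n]$ is an ordinary finite automaton accepting the singleton $\{v_{n}\}$ when $v_{n}\in L$ and accepting the empty language otherwise. The key point is that the step from $A[n]$ to $A[n+1]$ is carried out by a variation (evolution) operator, and Evolutionary Automata are permitted \emph{non-recursive} variation operators (the evolution principle from Section~2); hence this construction is legitimate even when $L$ is Turing-undecidable or outright non-RE, since we need not compute the sequence $n\mapsto v_{n}$ restricted to $L$. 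By the definition of the terminal mode, $w\in TL(E)$ iff some $A[n]$ reaches an accepting state on input $w$, i.e. iff there is an $n$ with $v_{n}=w$ and $v_{n}\in L$, i.e. iff $w\in L$; thus $TL(E)=L$. The degenerate cases ($L$ empty, or $L$ finite) are handled by taking all $A[n]$ to accept $\emptyset$, respectively by repeating the finitely many words of $L$ cofinally along the chain; the verification is unchanged.

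For Interaction Machines the construction is dual and uses the interaction principle rather than evolution: on input $w$ the machine emits $w$ along its output stream to an external (arbitrary, possibly non-recursive) partner, which answers the query ``$w\in L$?'' along the input stream, whereupon the machine enters an accepting state exactly when the answer is affirmative. Because the interaction partner is a more expressive, open component that may realize any $0/1$-valued function on $X^{*}$, the accepted (terminal) language of this Interaction Machine is precisely $L$. Together with the trivial inclusion this shows that both classes of terminal languages equal the full power set of $X^{*}$, i.e. the class of all languages in the alphabet $X$, as stated; this also recovers the results of \cite{burgin12,wegner12}.

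\medskip\noindent
The main obstacle I anticipate is not computational but definitional: one must confirm that the formal frameworks for Evolutionary Automata and Interaction Machines as set up in \cite{burgin12,wegner12} genuinely allow (i) generation-to-generation variation operators that are not effectively computable, and (ii) an interaction partner/environment stream encoding an arbitrary characteristic function, and that ``acceptance in the terminal mode'' is exactly ``some generation reaches an accepting state'', with no hidden effectiveness or well-foundedness requirement smuggled in. Once those conventions are pinned down, everything else is bookkeeping; the subtlety lies solely in checking that the models are, as claimed, unrestricted enough to absorb an arbitrary subset of $X^{*}$.
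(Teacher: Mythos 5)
Your construction is essentially the one the paper relies on, with one caveat: the paper does not actually prove this theorem itself --- it imports it from \cite{burgin12,wegner12} --- but it reproduces the same argument almost verbatim in Section~4 for the \$-calculus (Theorem on expressiveness, version I). There, as in your Evolutionary Automaton half, one takes for each $w\in L$ a finite automaton $A_w$ accepting exactly $\{w\}$ and passing any other input along to its successor, and lets the (possibly non-recursive) sequence $E=\{A[t]=A_w;\ w\in L\}$ realize $L$ as a terminal language; your variant, which enumerates all of $X^{*}$ and inserts an empty-language acceptor when $v_n\notin L$, is an inessential reshuffling of the same idea, and your explicit treatment of the trivial inclusion and of the degenerate finite/empty cases is if anything more careful than the source. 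Where you genuinely diverge is the Interaction Machine half: the paper and \cite{wegner12} reuse the same chain-of-agents construction (each agent accepts one word and forwards the rest, interacting with predecessor and successor), whereas you use a single machine consulting an arbitrary external partner that answers ``$w\in L$?'' --- i.e., you lean on the interaction principle in its oracle form rather than its infinite-components form. Both are legitimate readings of the model; the oracle version is shorter but puts all the weight on the assumption that the environment stream may encode an arbitrary characteristic function, which is exactly the definitional point you flag at the end, and which the paper itself acknowledges only informally (``the power and magic of the Oracle black box''). So your anticipated obstacle is real but is resolved by fiat in the cited frameworks rather than by any additional argument.
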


We believe that analogous proofs can be derived for \$-calculus (see next section for the details), $\pi$-calculus (pending that replication operator allows for infinity), cellular automata (extended to random automata networks, where each cell may represent a different finite state automaton), neural networks, Turing u-machines (pending that they allow for an infinite number of nodes), i.e., models where we can derive the sequence of components inheriting all needed information from their predecessors, i.e., we can repeat essentially the proofs for evolutionary automata and interaction machines \cite{burgin12,wegner12}.  Thus we will write, skipping the proofs, the following.

\begin{conjecture} Terminal languages for \$-calculus, $\pi$-calculus, cellular automata generalized to random automata networks, neural networks and Turing u-machines coincide with the class of all languages in the alphabet $X$.\hfill $\Box$
\end{conjecture}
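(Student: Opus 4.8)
The plan is to reduce each listed model to the template already established for evolutionary automata and interaction machines in the preceding theorem (from \cite{burgin12,wegner12}). One inclusion is free: every terminal language $TL(E)$ is by definition a set of words over $X$. So all the work lies in the converse — given an arbitrary $L\subseteq X^{*}$, possibly non-r.e., construct in each model an evolving system $E$ whose chain of components $A[0],A[1],A[2],\dots$ satisfies $w\in L$ iff $A[n]$ reaches an accepting state on input $w$ for some $n$.

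First I would fix an enumeration (not assumed effective) $w_{0},w_{1},w_{2},\dots$ of the words of $L$ and set $F_{n}=\{w_{0},\dots,w_{n}\}$. Each $F_{n}$ is finite, hence accepted by a finite automaton $B_{n}$, and $B_{n}$ can be obtained from $B_{n-1}$ by grafting on a single branch for $w_{n}$; this is exactly the ``inheriting all needed information from the predecessor'' situation mentioned in the text. The body of the proof is then to show that each of the five formalisms can host the chain $B_{0},B_{1},B_{2},\dots$ with that inheritance property and with terminal-mode acceptance matching membership in $L$.

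For the \$-calculus I would encode $B_{n}$ as a \$-expression built from the cost-directed $k\Omega$-search combinators of Section~3 and invoke the evolution principle: the agent realizing $B_{n}$ spawns and hands its transition table to the agent realizing $B_{n+1}$, and ``$A[n]$ in an accepting state'' is read off as a goal (zero-cost) state, so the argument of \cite{burgin12,wegner12} carries over almost verbatim. For the $\pi$-calculus I would generate the infinite population with the replication operator $!P$ and pass the accumulated table along name channels; this is the step where the parenthetical caveat ``pending that replication allows for infinity'' bites, since one must argue that guarded replication yields a genuinely infinite chain rather than merely unboundedly long finite ones. For random automata networks I would lay the $B_{n}$ out along one axis of the infinite grid, each cell being the automaton $B_{n}$ and wired to read the state of cell $n-1$; for neural networks I would use an infinite, or real-weighted recurrent, architecture realizing the same chain in Siegelmann's style; for Turing u-machines the unorganized infinite node set plays the role of the chain, again only if infinitely many nodes are permitted.

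The main obstacle — and the reason this is stated as a conjecture — is twofold. First, for each individual formalism one must check that it really does support a \emph{countably infinite} chain of components with full information inheritance: this is immediate for the \$-calculus (evolution is built in) and for random automata networks, but is genuinely open for replication-based $\pi$-calculus and for u-machines, as the text itself flags. Second, since $L$ need not be r.e., the enumeration $w_{0},w_{1},\dots$ and hence the seed of the chain is non-recursive, so the model must be allowed a non-recursive initial configuration (the infinity principle) or an oracle-like input; one then has to confirm that terminal-mode acceptance, which consumes only finitely much of the seed per word, is still well-defined. Once both points are discharged for a given model, the evolutionary-automata argument of \cite{burgin12,wegner12} transfers directly, which is precisely why the proofs are omitted here.
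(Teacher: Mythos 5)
Your proposal follows essentially the route the paper itself intends: the statement is deliberately left as a conjecture with the proofs skipped, but the template you use --- a (non-effective) enumeration of $L$, a chain of finite acceptors each inheriting the needed information from its predecessor, and terminal-mode acceptance --- is exactly the evolutionary-automata/interaction-machines argument of \cite{burgin12,wegner12} that the paper says should be ``repeated essentially,'' and it is carried out for the \$-calculus case in the first expressiveness theorem of Section~4 (there with one single-word acceptor $A_w$ per $w\in L$ passing non-accepted inputs to its successor, rather than your cumulative automata $B_n$ for $F_n=\{w_0,\dots,w_n\}$ --- an immaterial variation). You also correctly isolate the two points (whether replication-based $\pi$-calculus and u-machines genuinely support an infinite chain, and the non-recursive seed required when $L$ is not r.e.) that are precisely why the paper records this as a conjecture rather than a theorem.
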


We can also safely assume that models based on Oracles, i.e., Turing o-machines \cite{turing39} and Site and Internet Machines can also accept arbitrary languages over a given alphabet, because of the power and magic of the Oracle ``black box''.

\begin{theorem} Terminal languages of o-machines and Site and Internet Machines coincide with the class of all languages in the alphabet $X$.
\end{theorem}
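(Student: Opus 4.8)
Proof proposal for the final theorem (terminal languages of $o$-machines and Site and Internet Machines coincide with the class of all languages over $X$).

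The plan is to follow exactly the two-directional argument that underlies the earlier theorem on Evolutionary Automata and Interaction Machines. One inclusion is trivial: every terminal language $TL(E)$ of an $o$-machine or a Site/Internet Machine is by definition a subset of $X^*$ (or $X^\omega$, depending on the word model in force), so the terminal languages are contained in the class of all languages over $X$. The entire content is the reverse inclusion, so first I would fix an arbitrary language $L \subseteq X^*$ and exhibit a concrete $o$-machine whose terminal language is exactly $L$.

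The key step is to use the Oracle as a membership black box for $L$. Since the definition of super-Turing computation (and of Turing's $o$-machines from \cite{turing39}) permits an Oracle whose answers need not be computable, I would take the Oracle set to be $L$ itself: on input $w$, the $o$-machine writes $w$ on its query tape, enters the query state, and in one step receives the answer ``$w \in L$'' or ``$w \notin L$''; in the first case it enters an accepting state, in the second it diverges (or rejects, according to the terminal-mode convention of Definition on terminal mode). Then the set of words accepted in the terminal mode is precisely $L$. For the Site and Internet Machines (van Leeuwen--Wiedermann) the same idea applies with the Oracle replaced by the external environment/advice port: the machine communicates $w$ on its input port and the environment, which we are free to specify, returns the characteristic bit of $L$; acceptance in terminal mode again yields exactly $L$. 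Because $L$ was arbitrary, every language over $X$ arises as a terminal language, and combined with the trivial inclusion this gives the claimed equality.

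The main obstacle — and it is conceptual rather than technical — is making precise the sense in which an $o$-machine or a Site/Internet Machine is ``allowed'' to carry an arbitrary non-recursive Oracle or an arbitrarily powerful environment, and checking that the terminal-mode acceptance condition of Definition on terminal mode is the right one to match against it (in particular, that a single $o$-machine, or a single-component degenerate ``chain'' $A[n]$, suffices and we need not invoke an evolving sequence of automata as in the Evolutionary Automata proof). This is essentially the observation the text itself makes — ``because of the power and magic of the Oracle black box'' — so I would state it as a standing modelling assumption, cite \cite{turing39} and the Site/Internet Machine references, and keep the formal write-up to the two short inclusions above rather than re-deriving the predecessor-inheritance machinery of \cite{burgin12,wegner12}, which is not needed here.
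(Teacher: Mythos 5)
Your proposal is correct and follows exactly the idea the paper itself relies on: the paper in fact gives no formal proof of this theorem at all, only the preceding remark that such models accept arbitrary languages ``because of the power and magic of the Oracle black box,'' and your construction (taking the Oracle set to be $L$ itself, resp.\ letting the environment of a Site/Internet Machine return the characteristic bit of $L$) is a faithful and more explicit elaboration of precisely that remark. Nothing in your write-up conflicts with the paper; if anything, it supplies the missing detail.
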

From Theorems 2.19, 2.20 and Conjecture 2.1 we can derive immediately the conclusion.

\begin{corollary} Expressiveness of  \$-calculus, $\pi$-calculus, cellular automata, neural networks, Turing o-machines and u-machines, Evolutionary Automata and Interaction Machines is the same and allow to accept all languages over a given finite alphabet. \hfill $\Box$
\end{corollary}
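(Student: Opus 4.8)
The plan is to assemble the corollary directly from the three preceding results, since each of them already asserts that a block of the listed models accepts the full class of languages over $X$ in terminal mode; the corollary is then just the observation that "the class of all languages" is a ceiling that cannot be exceeded, so every model that reaches it has the same expressiveness. First I would fix notation, writing $2^{X^*}$ for the class of all languages over the finite alphabet $X$, i.e. all subsets of $X^*$. The first step is the trivial upper bound: by Definition of terminal languages, any word accepted in terminal mode by any of these automata is a word over $X$, so the terminal language class $TL(E)$ of every listed model $E$ is contained in $2^{X^*}$. This bound is identical for all of them and needs no computation.

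The second step is to feed in the lower bounds. Theorem 2.19 gives that the terminal languages of Evolutionary Automata and of Interaction Machines are exactly $2^{X^*}$. Theorem 2.20 gives the same for Turing o-machines and for Site and Internet Machines. Conjecture 2.1 gives the same for \$-calculus, $\pi$-calculus (under the proviso that the replication operator permits unbounded instantiation), cellular automata generalized to random automata networks, neural networks, and Turing u-machines (under the proviso of an unbounded number of nodes). Combining each of these with the containment from the first step, the terminal language class of each model in the list equals $2^{X^*}$; in particular all of them coincide, which is precisely the assertion of the corollary. I would also insert one sentence contrasting this with the ordinary Turing machine, whose accepted/terminal language class is only the recursively enumerable languages — a proper, and merely countable, subclass of $2^{X^*}$ — so that the corollary genuinely records a jump in power.

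The only nontrivial point, and hence the place I expect to spend a line of caveat rather than a line of argument, is that Conjecture 2.1 is stated without proof (the excerpt itself flags that the proofs for those models are "skipped" and expected to mirror \cite{burgin12,wegner12}); consequently the corollary is, strictly speaking, conditional on that conjecture for the models it covers, and unconditional only for Evolutionary Automata, Interaction Machines, o-machines, and Site and Internet Machines. A secondary caveat I would make explicit is the scope of the word "expressiveness": here it means exactly the class of terminal-mode languages over a finite alphabet, so the corollary does not claim these models are equivalent as structures or under other yardsticks (function computation, resource-bounded or interactive behaviour), only that they all attain the same ceiling $2^{X^*}$ at the level of language acceptance. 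With those two remarks in place, the derivation itself is immediate from Theorems 2.19 and 2.20 and Conjecture 2.1 together with the elementary containment $TL(E)\subseteq 2^{X^*}$.
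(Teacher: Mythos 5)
Your proposal matches the paper exactly: the corollary is derived immediately from Theorems 2.19 and 2.20 together with Conjecture 2.1, plus the trivial observation that the class of all languages over $X$ is an upper bound that cannot be exceeded. Your added caveat that the statement is strictly conditional on Conjecture 2.1 for \$-calculus, $\pi$-calculus, cellular automata, neural networks and u-machines is accurate and, if anything, more careful than the paper itself.
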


It is not clear at this moment how to classify expressiveness of Infinite Time Turing Machines and Accelerating Turing Machines - simply, the conditions of an infinite number of steps or doubling the speed of each successive step alone seem not be sufficient to prove that those models can accept all languages over a given alphabet. Similarly, we do not have enough details on c-machines, because they were only briefly mentioned in the original paper on Turing machines \cite{turing36}. Also we cannot properly classify at this moment the expressiveness of Inductive Turing Machines and Persistent Turing Machines in the form of the stand-alone components. However, it is clear that they, as components of Evolutionary Automata or Interaction Machines, may achieve such enormous expressiveness of their hosts.

From the above we conclude the following.

\begin{corollary}  Turing o-machines and u-machines, Site and Internet Machines, \$-calculus, $\pi$-cal\-cu\-lus, cellular automata, neural networks, Evolutionary Automata and Interaction Machines accept all U-complete, D-complete and H-complete languages. \hfill $\Box$
\end{corollary}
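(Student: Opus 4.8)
\medskip
\noindent
\emph{Proof.}
The plan is to obtain the statement as a direct corollary of the expressiveness results already collected, once one observes that $U$-complete, $D$-complete and $H$-complete languages are --- regardless of their internal (un)decidability structure, which is what makes them interesting --- nothing more than languages over a fixed finite alphabet $X$. Indeed the canonical representatives $L_u$, $L_d$ and $L_{hd}$, as well as $L_{ne}$, $L_e$, PCP, BBP, ECP, EIP and their complements, are by construction sets of binary strings, so each of them is a subset of $X^{*}$ for $X=\{0,1\}$ (or for whatever finite alphabet the instances are encoded over), and the same holds for every language reducible to one of these. In other words, the three new classes are sub-collections of the class of all languages over $X$, and nothing beyond that fact is needed here.

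First I would record that, by the Corollary asserting that every $U$-complete language belongs to the \emph{RE-nonREC} class and every $D$-complete and $H$-complete language to the \emph{non-RE} class, each such language lies in $X^{*}$ and hence belongs to the class of all languages over the alphabet $X$. Next I would invoke, in turn, Theorem 2.19 (terminal languages of Evolutionary Automata and Interaction Machines coincide with all languages over $X$), Conjecture 2.1 (the same equality for \$-calculus, $\pi$-calculus, cellular automata generalized to random automata networks, neural networks and Turing u-machines) and Theorem 2.20 (the same equality for Turing o-machines and Site and Internet Machines), which together were summarized in the preceding Corollary on the equal expressiveness of all these models. Combining these, every language over $X$ is the terminal language of some instance of each of the listed models; in particular each of these models has, among its terminal languages, every $U$-complete, every $D$-complete and every $H$-complete language, which is precisely the assertion.

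The genuinely load-bearing step is the appeal to Conjecture 2.1: for \$-calculus, $\pi$-calculus, cellular automata, neural networks and u-machines the equality ``terminal languages $=$ all languages over $X$'' is not established here but only argued to follow by repeating, mutatis mutandis, the constructions given for Evolutionary Automata and Interaction Machines in \cite{burgin12,wegner12}. To turn the present sketch into a full proof one would have to carry those constructions out model by model: for each model exhibit an infinite chain of components in which every component inherits from its predecessor enough information to enumerate and test the next candidate string over $X$, and verify that the model's own composition mechanism actually realizes such a chain --- this is where the $\pi$-calculus needs its replication operator to supply the infinity, where the cellular model must be relaxed to heterogeneous automata networks, and where the u-machine must be allowed unboundedly many nodes. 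For the four models covered by genuine theorems --- Evolutionary Automata, Interaction Machines, o-machines and Site and Internet Machines --- no such obstacle remains and the corollary is unconditional; for the remaining models it is only as strong as Conjecture 2.1. Accordingly I would phrase the result as following from Theorems 2.19 and 2.20 outright, and from Conjecture 2.1 for the models it covers. \hfill $\Box$
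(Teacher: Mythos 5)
Your proposal is correct and follows essentially the same route as the paper: the paper derives this corollary immediately from the preceding expressiveness corollary (itself resting on Theorems 2.19, 2.20 and Conjecture 2.1), observing that U-, D- and H-complete languages are simply languages over a finite alphabet and hence among the ``all languages over $X$'' that each listed model accepts. Your explicit caveat that the claim is unconditional only for the four models covered by genuine theorems, and remains conditional on Conjecture 2.1 for \$-calculus, $\pi$-calculus, cellular automata, neural networks and u-machines, is a fair and accurate sharpening of what the paper leaves implicit.
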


\section{The \$-calculus super-Turing model of computation}

We will concentrate our investigation on one of super-Turing models of computation, the \$-calculus.

The \$-calculus (read: cost calculus) is a process algebra using anytime algorithms for automatic problem solving and automatic programming
targeting {\em intractable} and {\em undecidable problems}
\cite{eber05a,eber07,eber08}.
Note that automatic problem solving and automatic programing are in a general case undecidable, however they can and have to be approximated to solve important problems in computer science and many other areas of life.

\$-calculus is is a formalization of
resource-bounded computation (also called anytime algorithms),
proposed by Dean, Horvitz, Zilberstein and Russell in the late 1980s and
early 1990s \cite{horvitz01,russell95}.
Anytime algorithms are guaranteed to produce better
results if more resources (e.g., time, memory) become available.

On the other hand, the standard representative of process algebras, the $\pi$-calculus
\cite{milner89,milner99}
is believed to be the most mature approach for concurrent systems and multi-agent systems.

A unique feature of \$-calculus is support for problem solving by incrementally {\em searching} for solutions and using {\em cost} performance measure to direct its search.

Historically, \$-calculus has been ispired both by Church's $\lambda$-calulus \cite{church36} and Milner's $\pi$-calculus \cite{milner89,milner99}.
The \$-calculus rests upon the primitive notion of {\em cost} in a similar
way as the $\pi$-calculus was built around a central concept of {\em interaction} and $\lambda$-calculus around a {\em function}.
Cost and interaction concepts are interrelated in the sense that cost captures
the quality of an agent interaction with its environment.
The basic \$-calculus search method used for problem solving is called $k \Omega$-optimization.

\begin{quote}
{\em The $k \Omega$-optimization represents this ``impossible'' to construct, but ``possible to approximate indefinitely''
universal algorithm.

The advantage of the \$-calculus approach is that it allows to experiment, combine and compare various methods in a uniform framework, to experiment with them; to select on-line the best method, or perhaps, to produce a new search or machine learning algorithm. }
\end{quote}

The $k \Omega$-optimization  is a very general search method, allowing the simulation of many other search algorithms,
including A*, minimax, simulated annealing, dynamic programming, tabu search, machine learning like ID3, neural networks or evolutionary algorithms (for more details on search algorithms, see, e.g., \cite{russell95}).
Note that machine learning is based on search too, i.e., through space of functions, hypotheses, rewards, and so on. We can write briefly:  {\em Learning = Search + Memorization}. Memorization, called also memoization, distinguish learning from an amnesiac adaptive optimization that is based on search only.

Potential and implemented \$-calculus applications include machine learning and data mining, mobile robotics, cost-driven supercomputers, the cost languages paradigm, Watson-like decision systems, bioinformatics, automatic programming, automatic problem solving, big data \& cloud computing, fault-tolerant computing \& cyber security, neural networks \& deep NNs, evolutionary computation, compiler optimization.

\subsection{The \$-calculus syntax}

In \$-calculus everything is a cost expression: agents, environment,
communication, interaction links, inference engines, modified structures,
data, code, and meta-code.
\$-expressions can be simple or composite.
Simple \$-expressions $\alpha$ are considered to be executed in one atomic
indivisible step. Composite  \$-expressions $P$ consist of distinguished
components (simple or composite ones) and can be interrupted.

\begin{definition}
{\bf The \$-calculus} The set ${\cal P}$ of \$-calculus process expressions consists of simple
\$-expressions $\alpha$ and composite \$-expressions $P$, and is defined by the following syntax:

\begin{center}
\scalebox{0.98}{
\begin{tabular}{llll}
$\alpha$ & ::= & $(\$_{i \in I}\;P_i)$ & cost \\
& $|$ & $(\rightarrow_{i \in I}\;c\;P_i)$ & send $P_i$ with evaluation through channel $c$ \\
& $|$ & $(\leftarrow_{i \in I}\;c\;X_i)$ & receive $X_i$ from channel $c$ \\
& $|$ & $('_{i \in I}\;P_i)$ & suppress evaluation of $P_i$\\
& $|$ & $(a_{i \in I}\;P_i)$ & defined call of simple \$-expression $a$ with parameters $P_i$, and \\
& &  & and its optional associated definition  $(:=\; (a_{i \in I} \;X_i)\;<R>)$ \\
 & & & with body $R$ evaluated atomically\\
& $|$ & $(\bar{a}_{i \in I}\;P_i)$ & negation of defined call of simple \$-expression $a$
\end{tabular} }
\end{center}

\begin{center}
\scalebox{0.98}{
\begin{tabular}{llll}
$P$ & ::= & $( \konk_{i \in I} \;\alpha\:P_i)$ & sequential composition \\
& $|$ & $(\pa_{i \in I}\; P_i)$ & parallel composition \\
& $|$ & $(\msum_{i \in I} \;P_i)$ & cost choice \\
& $|$ & $(\psum_{i \in I} \;P_i)$ & adversary choice \\
& $|$ & $(\sqcup_{i \in I}\; \;P_i)$ & general choice \\
& $|$ & $(f_{i \in I}\;P_i)$ & defined process call $f$ with parameters $P_i$,
and its associated \\
& &  & definition $(:=\; (f_{i \in I} \;X_i)\;R)$ with body $R$ (normally
\\
& & & suppressed); $({}^1\;R)$ will force evaluation of $R$ exactly once
\end{tabular} }
\end{center}
\end{definition}

The indexing set $I$ is a possibly countably infinite.
In the case when $I$ is empty, we
write empty parallel composition, general, cost and adversary choices as  $\bot$
(blocking), and empty sequential composition ($I$ empty and $\alpha=\varepsilon$)
as  $\varepsilon$  (invisible transparent
action, which is used to mask, make invisible parts of \$-expressions).
Adaptation (evolution/upgrade)
is an essential part of \$-calculus, and all \$-calculus operators are infinite
(an indexing set $I$ is unbounded). The \$-calculus agents interact through
send-receive pair as the essential primitives of the model.

Sequential composition is used when \$-expressions are evaluated in a
textual order. Parallel composition is used when expressions run in parallel
and it picks a subset of non-blocked elements at random. Cost choice is used
to select the cheapest alternative according to a cost metric. Adversary
choice is used to select the most expensive alternative according to a cost metric.
General
choice picks one non-blocked element at random. General choice is different
from cost and adversary choices. It uses guards satisfiability. Cost and adversary choices are based on
cost functions. Call and definition encapsulate expressions in a more
complex form (like procedure or function definitions in programming
languages). In particular, they specify recursive or iterative repetition of
\$-expressions.

Simple cost expressions execute in one atomic step.
Cost functions are used for optimization and adaptation.
The user is free to define his/her own cost metrics. Send and receive perform
handshaking message-passing communication, and inferencing.
The suppression operator suppresses evaluation of the underlying \$-expressions.
Additionally, a user is free to define
her/his own simple \$-expressions, which may or may not be negated.

Note that because of infinity in indexes operators may require infinite resources (i-decidable), some can be computed
asymptotically in finite time despite infinity of indexes (a-decidable), and some use finite resources
(polynomial and exponential classes 1 and 2). We need such operators to model hypercomputational capabilities.
If indexing set is finite we have classical TM, if infinite - we can simulate  infinite cells from cellular automata,
or infinite tape of TM (what TM cannot do). Note that there is nothing decremental in that, for example,
Robin Milner's $\pi$-calculus also uses
infinity (in replication operator) and nobody objects that $\pi$-calculus is not constructable
(because in most practical cases replication is finite). Similarly, \$-calculus for effective solutions of
intractable problems uses finite indexing sets, leading to finite trees by applying the $k\Omega$-optimization. However, much
more challenging is to prove asymptotic completeness and optimality of $k\Omega$-search operating on infinite trees (a-decidability).

\subsection{The \$-calculus semantics: The $k\Omega$-search}

The operational semantics of the {\$}-calculus is defined
using the $k\Omega$-search that captures the dynamic nature and incomplete
knowledge associated with the
construction of the problem solving tree.
The basic \$-calculus problem solving method,
the $k \Omega$-optimization, is a very general search method providing
meta-control, and
allowing to simulate many other search algorithms,
including A*, minimax, dynamic programming, tabu search, or evolutionary algorithms \cite{russell95}.
The problem solving works iteratively: through select, examine and execute phases.

In the select phase  the tree of possible solutions is generated
up to $k$ steps ahead, and agent identifies its alphabet of interest for optimization $\Omega$. This
means that the tree of solutions may be incomplete in width and depth (to deal with complexity).
However, incomplete (missing) parts of the tree
are modeled by silent \$-expressions $\varepsilon$, and their cost estimated (i.e., not all information is lost).
The above
means that $k \Omega $-optimization may be if some conditions are satisfied to be complete and optimal.
In the examine phase the trees of possible solutions are pruned minimizing cost of solutions,
and in the execute phase up to $n$ instructions are executed.
Moreover, because the \$ operator may capture not only the cost of solutions, but the cost of resources used
to find a solution, we obtain a powerful tool to avoid methods that are too costly,
i.e., the \$-calculus directly
minimizes search cost. This basic feature, inherited from anytime algorithms, is needed to tackle directly
hard optimization problems, and allows to solve total optimization problems (the best quality solutions with
minimal search costs).
The variable $k$ refers to the limited
horizon for optimization, necessary due to the unpredictable dynamic nature
of the environment. The variable $\Omega $ refers to a reduced alphabet of
information. No agent ever has reliable information about all factors that
influence all agents behavior. To compensate for this, we mask factors where
information is not available from consideration; reducing the alphabet of
variables used by the \$-function. By using the $k\Omega $-optimization to find
the strategy with the lowest \$-function, meta-system finds a satisficing
solution, and sometimes the optimal one. This avoids wasting time trying to
optimize behavior beyond the foreseeable future. It also limits
consideration to those issues where relevant information is available.
Thus the $k\Omega$ optimization provides a flexible approach to local and/or
global optimization in time or space. Technically this is done by replacing
parts of \$-expressions with invisible \$-expressions $\varepsilon$, which
remove part of the world from consideration (however, they are not ignored entirely - the
cost of invisible actions is estimated).

The $k\Omega$-optimization meta-search procedure can be used both for single and multiple
cooperative or competitive agents working online ($n \neq 0$) or offline ($n=0$).
The \$-calculus programs consist of multiple \$-expressions for
several agents.

\medskip
Let's define several auxiliary notions used in the $k\Omega$-optimization meta-search.
Let:
\begin{itemize}
\item  {${\cal A}$ -} be an alphabet of \$-expression names
for an enumerable {\em universe of agent population}
(including an environment, i.e., one agent may represent an environment). Let
${\cal A} = \bigcup_i A_i$, where $A_i$ is the alphabet of \$-expression names (simple or
complex) used by the $i$-th agent, $i=1,2,...,\infty$. We will assume that the names of \$-expressions are unique, i.e.,
$A_i \cap A_j = \emptyset, i \neq j$ (this always can be satisfied by indexing \$-expression
name  by a unique agent index. This is needed for an agent to execute only own
actions).
The agent population size will be denoted by $p=1,2,...,\infty$.

\item  {$x_i[0] \in {\cal P}$ -}  be an initial \$-expression for the $i$-th agent,
and its initial search procedure $k\Omega_i[0]$.

\item
{$min(\$_i\;(k\Omega_i[t]\;\;x_i[t]))$ -} be an implicit default goal and
$Q_i \subseteq {\cal P}$ be an optional (explicit) goal.
The default goal is to find a pair of  \$-expressions,
i.e., any pair $(k\Omega_i[t],x_i[t])$
being
\[min\{(\$_i\;(k\Omega_i[t],x_i[t]))=\$_{1i}(\$_{2i}(k\Omega_i[t]),\$_{3i}(x_i[t]))\},\]
where $\$_{3i}$ is a problem-specific cost function, $\$_{2i}$ is a search algorithm
cost function,
and $\$_{1i}$ is an aggregating function combining $\$_{2i}$ and $\$_{3i}$.
This is the default goal for total optimization looking for the best solutions $x_i[t]$ with minimal search
costs $k\Omega_i[t]$.
It is also possible to look for the optimal solution only, i.e.,
the best $x_i[t]$ with minimal value of $\$_{i3}$, or the best search algorithm $k\Omega_i[t]$ with minimal costs
of $\$_{i2}$.
The default goal can be overwritten or supplemented by any other termination condition
(in the form of an arbitrary \$-expression
$Q$) like the maximum number
of iterations, the lack of progress, etc.

\item  {$\$_i$ -} a cost function performance measure
(selected from the library or user defined). It consists of the problem specific
cost function $\$_{3i}$, a search algorithm cost function $\$_{2i}$, and an aggregating
function $\$_{1i}$. Typically, a user provides cost of simple
\$-expressions or an agent can learn such costs (e.g., by reinforcement learning).
The user selects or defines also how the costs of composite \$-expressions will
be computed.
The cost of the solution tree is the function of its components:
costs of nodes (states) and edges (actions).
This allows to express both the quality of solutions
and search cost.

\item  {$\Omega_i \subseteq {\cal A}$ -} a {\em scope of
deliberation/interests of the i-th agent}, i.e., a subset of the universe's
of \$-expressions chosen for optimization. All elements of ${\cal A} -
\Omega_i$ represent irrelevant or unreachable parts of an environment, of
a given agent or other agents, and will
become invisible (replaced by $\varepsilon$), thus either ignored or
unreachable for a given agent (makes optimization local spatially).
Expressions over $A_i - \Omega_i$ will be treated as observationally
congruent (cost of $\varepsilon$ will be neutral in optimization, e.g., typically
set to 0). All expressions over $\Omega_i - A_i$ will be treated as
strongly congruent - they will be replaced by $\varepsilon$ and although
invisible, their cost will be estimated using the best available knowledge of an
agent (may take arbitrary values from the cost function domain).

\item  {$b_i = 0,1,2,...,\infty$ -} a branching factor of the search tree (LTS), i.e.,
the maximum number of
generated children for a parent node. For example, hill climbing has $b_i=1$,
for binary tree $b_i=2$, and $b_i=\infty$ is a shorthand to mean to
generate all children (possibly infinite many).

\item  {$k_i = 0,1,2,...,\infty$ -} represents {\em the depth of deliberation},
i.e., the number of steps in the derivation tree selected for optimization
in the examine phase (decreasing $k_i$ prevents combinatorial explosion, but
can make optimization local in time). $k_i=\infty$ is a shorthand to mean to
the end to reach a goal (may not require infinite number of steps). $k_i=0$
means omitting optimization (i.e., the empty deliberation) leading to reactive
behaviors. Similarly, a branching factor $b_i=0$ will lead to an empty deliberation too.
Steps consist of multisets of simple \$-expressions, i.e., a
parallel execution of one or more simple \$-expressions constitutes one
elementary step.

\item  {$n_i = 0,1,2,...,\infty$ - } the number of steps selected for
execution in the execute phase. For $n_i>k_i$ steps larger than $k_i$ will
be executed without optimization in reactive manner. For $n_i=0$ execution
will be postponed until the goal will be reached.

For the depth of deliberation $k_i=0$, the $k \Omega$-search will work in the style
of imperative programs (reactive agents), executing up to $n_i$ consecutive steps
in each loop iteration. For $n_i=0$ search will be offline, otherwise
for $n_i \neq 0$ - online.

\item {$gp$, $reinf$, $strongcon$, $update$ - } auxiliary flags used in the
$k\Omega$-optimization meta-search procedure.
\end{itemize}

Each agent has its own $k\Omega$-search procedure $k\Omega_i[t]$ used to build
the solution
$x_i[t]$ that takes into account other agent actions (by selecting its alphabet of
interests $\Omega_i$ that takes actions of other agents into account).
Thus each agent will construct its own view of the whole universe that only sometimes
will be the same for all agents (this is an analogy to the subjective
view of the ``objective'' world by individuals having possibly different goals and
different perception of the universe).

\begin{definition}
{\em
{\bf The $k\Omega$-Optimization Meta-Search Procedure}
{\em The {\em $k\Omega$- optimization meta-search procedure} $k\Omega_i[t]$
for the i-th agent}, $i=0,1,2,...$,
from an enumerable {\em universe of agent population} and
working in time generations $t=0,1,2,...$ is a complex \$-expression (meta-procedure)
consisting of simple \$-expressions $init_i[t]$, $sel_i[t]$, $exam_i[t]$,
$goal_i[t]$, $\$_i[t]$, complex \$-expression $loop_i[t]$ and $exec_i[t]$,
and constructing solutions, its input $x_i[t]$, from predefined and user defined
simple and complex \$-expressions.
For simplicity, we will skip time and agent indices in most cases if it does
not cause confusion,
and we will write
$init$, $loop$,
$sel$, $exam$, $goal_i$ and $\$_i$.
Each {\em i-th} agent performs the following
$k\Omega$-search procedure $k\Omega_i[t]$ in the time generations $t=0,1,2,...$:

\medskip
\scalebox{0.9}{
\begin{tabular}{ll}
$(:= (k\Omega_i[t]\;\;x_i[t])\; ( \konk \;
(init\;\;(k\Omega_i[0] \;x_i[0]))$ & {\small // initialize $k\Omega_i[0]$ and $x_i[0]$} \\
\hspace{.55cm}$(loop\;\;x_i[t+1]))$ & {\small // basic cycle: select, examine, } \\
$)$ &                   {\small // execute}
\end{tabular} }

\medskip\noindent
where $loop$ meta-\$-expression takes the form of the select-examine-execute cycle performing the
$k\Omega$-optimization until the goal is satisfied. At that point, the agent
re-initializes and works on a new goal in the style of the never ending
reactive program:

\medskip\noindent
\scalebox{0.9}{
\begin{tabular}{ll}
$(:=\;(loop\;\;x_i[t])$ & {\small // loop recursive definition } \\
\hspace{.55cm}$(\sqcup\;( \konk
\;\;(\overline{goal_i[t]} \;\;(k\Omega_i[t]\;x_i[t]))$ & {\small // goal not satisfied, default
goal }\\
& {\small // $min(\$_i\;(k\Omega_i[t]\;x_i[t]))$ }\\
\hspace{1.7cm}$(sel\;\;x_i[t])$ & {\small // select: build problem solution tree k step }\\
& {\small // deep, b wide} \\
\hspace{1.7cm}$(exam\;\;x_i[t])$ & {\small // examine: prune problem solution tree in }\\
& {\small // cost $\msum$ and in adversary $\psum$ choices} \\
\hspace{1.7cm}$(exec\;\;(k\Omega_i[t]\;x_i[t]))$ & {\small // execute: run optimal $x_i$ n steps
and }\\
& {\small // update $k\Omega_i$ parameters} \\
\hspace{1.65cm}$(loop\;\;x_i[t+1]))$ & {\small // return back to loop}\\
\hspace{1cm}$( \konk
\;(goal_i[t]\;\;(k\Omega_i[t]\;x_i[t]))$ & {\small // goal satisfied - re-initialize search} \\
\hspace{1.7cm}$(k\Omega_i[t]\;x_i[t])))$&\hspace{.4cm} \\
$)$ &
\end{tabular} }

Simple \$-expressions $init$, $sel$, $exam$, $goal$ with their atomicly executed
bodies are defined below. On the other hand, $exec$ can be interrupted after each
action, thus it is not atomic.

\begin{enumerate}
\item {{\bf Initialization} $(:= \;(init\;\;(k\Omega_i[0]\;x_i[0]))\;<init\_body>)$:}
where $init\_body =(\konk\;(\leftarrow_{i \in I}\;user\_channel\;X_i)$
$(\sqcup\;\overline{cond\_init} \;(\konk\;cond\_init\;
(init\_body)) )$,
and $cond\_init =(\sqcup \;(x_i[0]=\bot) \; (k_i=n_i=0))$, and  successive $X_i$,
$i=1,2,...$ will be the following:
$k\Omega_i[0]$ an initial meta-search procedure (default: as provided in this definition),
$k_i,b_i,n_i,\Omega_i,A_i$ (defaults: $k_i=b_i=n_i=\infty$, $\Omega_i= A_i={\cal A}$);
simple and complex \$-expressions definitions over  $A_i \cup \Omega_i$
(default: no definitions);\\
initialize costs of simple \$-expressions
randomly and set reinforcement learning flag $reinf=1$
(default: get costs of simple \$-expressions from the user, $reinf=0$);
$\$_{i1}$ an aggregating cost function (default: addition), $\$_{i2}$ and $\$_{i3}$
search and solution specific cost functions
(default: a standard cost function as defined in the next section);\\
$Q_i$ optional goal of computation (default: $min(\$_i\;(k\Omega_i[t],x_i[t]))$);\\
$x_i[0]$ an initial \$-expression solution (an initial state of LTS for the i-th agent)
over alphabet $A_i \cup \Omega_i$. This resets $gp_i$ flag to 0 (default: generate
$x_i[0]$ randomly in the style of genetic programming and $gp_i=1$);\\
/* receive from the user several values for initialization overwriting
possibly the defaults. If atomic initialization fails re-initialize $init$. */

\item {{\bf Goal} $(:= (goal_i[t] \, (k\Omega_i[t]\, x_i[t]))\, <goal\_body>)$:}
where $goal\_body$ checks for the~max\-imum predefined quantum of time
(to avoid undecidability or too long verification)
whether goal state defined in the $init$ phase has been reached.
If the quantum of time expires, it returns false $\bot$.

\item  {{\bf Select Phase}\\ $(:=(sel\;x_i[t])\,
<(\sqcup\, cond\_sel\_exam \, (\konk \, \overline{cond\_sel\_exam}\, sel\_body))> )$:}
where \\
$cond\_sel\_exam = (\sqcup\;(k_i=0)\;(b_i=0))$ and $sel\_body$
builds the search tree with the branching factor $b_i$ and depth $k_i$
over alphabet $A_i \cup \Omega_i$ starting from the current state $x_i[t]$.
For each state $s$ derive actions $a$ being mulitsets of simple \$-expressions,
and arriving in a new state $s'$.
Actions and new states are found in two ways:
\begin{enumerate}
\item
if $gp_i$ flag is set to 1 - by applying crossover/mutation
(in the form of send and receive operating on LTS trees) to obtain
a new state $s'$. A corresponding action between $s$ and $s'$ will be labeled
as observationally congruent $\varepsilon$
with neutral cost 0.
\item
if $gp_i$ flag is set to 0 - by applying inference rules of LTS to a state.

Each simple \$-expression in actions is labeled

\begin{itemize}
\item by its name if simple \$-expression  belongs to $A_i \cup \Omega_i$
and width and depth $b_i$,$k_i$ are
not exceeded,
\item is renamed by strongly congruent $\varepsilon$ with estimated cost
if flag $strongcong=1$ (default: renamed by weakly congruent $\varepsilon$ with
a neutral (zero) cost, $strongcong=0$) if
width $b_i$ or depth $k_i$
are exceeded /* hiding actions
outside of the agent's width or depth search horizon,
however not ignoring, but estimating their costs */.
\end{itemize}
\end{enumerate}

\noindent
For each new state $s'$ check whether width/depth of the tree is exceeded,
and whether it is
a goal state. If so, $s'$ becomes the leaf of the tree (for the current loop cycle),
and no new actions are generated,
otherwise continue to build the tree.
If $s'$ is a goal state, label it as a goal state.

\item  {{\bf Examine Phase}\\ $(:= (exam\;x_i[t])\,
<(\sqcup\, cond\_sel\_exam \, (\konk \, \overline{cond\_sel\_exam}\, exam\_body))> )$:}
where \\
$exam\_body$ prunes the search tree by selecting
paths with minimal cost in cost choices
and with maximal cost in adversary choices. Ties are broken randomly.
In optimization, simple \$-expressions  belonging to  $A_i - \Omega_i$  treat as
observationally congruent $\varepsilon$ with neutral cost
(typically, equal to 0 like e.g., for a standard cost function)
/* hiding agent's actions outside of its interests by ignoring their cost */.

\item  {{\bf Execute Phase}\\ $(:= (exec \, (k\Omega_i[t]\;x_i[t]))\,
exec\_body$):} where  \; $exec\_body = $\\
$(\konk\; (\sqcup\;
(\konk\; (n_i=0)(\overline{goal\_reached})(current\_node=leaf\_node\_with\_min\_costs))$
\newline
$(\konk\; (n_i=0)(goal\_reached)(execute(x_i[t]))(current\_node=leaf\_node))$\\
$(\konk\; (\overline{n_i=0})(execute\_n_i\_steps(x_i[t]))$\\
$(current\_node=node\_after\_n_i\_actions)))$
$update\_loop)$
\newline
/*  {\em If $n_i=0$ (offline search) and no goal state has been reached in the Select/Examine
phase there will be no execution in this cycle. Pick up the
most promising leaf node of the tree (with minimal cost) for expansion, i.e., make it a
current node (root of the subtree expanded in the next cycle of the loop
appended to an existing tree from the select phase, i.e., pruning will be invalidated
to accommodate eventual corrections after cost updates).
If $n_i=0$ (offline search) and a goal state has been reached in the Select/Examine
phase, execute optimal $x_i[t]$ up to the leaf node using a tree constructed and pruned
in the Select/Examine phase, or use LTS inference rules otherwise (for $gp=1$).
Make the leaf node a current
node for a possible expansion (if it is not a goal state - it will be a root of a new tree)
in the next cycle of the loop.
If $n_i \neq 0$ (online search), execute optimal $x_i[t]$ up to at most $n_i$ steps.
Make the last state a current state -
the root of the tree expanded in the
next cycle of the loop. In execution
simple \$-expressions  belonging to  $\Omega_i - A_i$ will be executed by other agents.}*/
\newline
The $update\_loop$ by default
does nothing (executes silently $\varepsilon$ with no cost) if $update$ flag is reset.
Otherwise if $update=1$, then
it gets from the user potentially all possible updates, e.g.,
new values of $b_i$, $k_i$, $n_i$ and
other parameters of
$k\Omega[t]$, including costs of simple \$-expressions, $\Omega_i$, $goal_i$.
If $update=1$ and the user does not provide own modifications (including possible
overwriting the $k\Omega[t]$), then self-modification
will be performed in the following way.
If execution was interrupted  (by receiving message from the user or environment
 invalidating solution found in the Select/Examine phase), then
$n_i=10$ if $n_i=\infty$, or $n_i=n_i-1$ if $n_i \neq 0$, or
$k_i=10$ if $n_i=0,k_i=\infty$, or $k_i=k_i-1$ if $n_i=0,k_i\neq\infty$.
If execution was not interrupted
increase $n_i=n_i+1$ pending $0<n_i\leq k_i$. If $n_i=k_i$ increase
$k_i=k_i+1$,$b_i=b_i+1$.
If cost of search ($\$_{2i}(k\Omega[t])$) larger than a predefined threshold
decrease $k_i$ and/or $b_i$, otherwise increase it. If reinforcement learning
was set up $reinf=1$ in the $init$ phase, then cost of simple \$-expressions will
be modified by reinforcement learning.
\end{enumerate}
}
\end{definition}

The building of the LTS tree in the select phase for $gp_i=0$ combines imperative
and rule-based/logic  styles of programming (we treat clause/production as a user-defined
\$-expression definition and call it by its name.  This is similar to Robert Kowalski's
 dynamic interpretation
of the left side of a clause as the name of procedure and the right side
as the body of procedure.).

In the $init$ and $exec/update$ phase, in fact,
a new search algorithm can be created (i.e., the old $k\Omega$ can be overwritten),
and being completely different from the original $k\Omega$-search.
The original $k\Omega$-search in self-modication
changes the values of its control parameters mostly, i.e., $k,n,b$,
but it could modify also
$goal$, $sel$, $exam$, $exec$ and \$.

Note that all parameters $k_i$, $n_i$, $\Omega_i$, $\$_i$, $A_i$, and
${\cal A}$ can evolve in successive loop iterations.
They are defined as the part of the $init$ phase, and modified/updated at
the end of the Execute phase $exec$. Note that they are associated
with a specific choice
of the meta-system: a $k\Omega$-optimization search.

\subsection{On cost performance measures and standard cost function}

The \$-calculus is built around the central notion of {\em cost}. The cost functions represent a uniform criterion of search and the quality of solutions in problem solving. Cost functions have their roots in von Neumann/Morgenstern utility theory; they satisfy axioms for utilities \cite{russell95}.
In decision theory they allow to choose states with optimal utilities on average (the maximum expected utility principle).
In \$-calculus they allow to choose states with minimal costs subject to uncertainty (expressed by probabilities, fuzzy set or rough set membership functions).

It is not clear whether it is possible to define a finite minimal and complete set of cost functions, i.e., possible to use ``for anything''.
\$-calculus approximates this desire for universality by defining a standard cost function to be open and possible to use for many things (but not for everything, thus a user may define own cost functions).
The user defines costs of atomic \$-expressions or they are randomly initialized and modified by reinforcement learning.

\begin{definition}[On standard cost function]
A {\em standard cost function} defines how to compute costs of complex \$-expressions assuming that cost of simple atomic \$-expressions are given/known, e.g.,
\begin{itemize}
\itemsep=0.9pt
\item
costs of sequential composition is defined as sum of costs of its components,
\item
cost of cost choice is equal to the cost of component with minimal cost,
\item
cost of adversary choice is equal to the cost of component with maximal cost,
\item
cost of general choice is equal to the average costs of components (if probabilities are used), or the cost of component with the maximal value of the product fuzzy set/rough set membership function  and component cost.
\end{itemize}
\end{definition}

A complete definition of standard cost function has been provided in \cite{eber07,eber08}.

Note that both $k\Omega[t]$ meta-search procedure and solution $x[t]$ take the form of \$-expressions whose costs are computed using the same standard \$-function $\$(k\Omega[t],x[t])=\$_1(\$_2(k\Omega[t]),\$_3(x[t]))$, where $\$_1$ is an aggregating function (typically addition or weighted addition), $\$_2$ and $\$_3$ are standard cost function computing costs of search and the quality of solution, respectively.

\subsection{On automatic problem solving as a multi-objective minimization search}

\begin{definition}[On search methods types]
Search methods (and $k\Omega$-optimization, in particular) can be
\begin{itemize}
\itemsep=0.9pt
\item
{\em complete} if no solutions are omitted,
\item
{\em optimal} if the highest quality solution is found,
\item
{\em search optimal} if the solution is provided with minimal search cost,
\item
{\em totally optimal} if the highest quality solution with minimal search cost is found.
\end{itemize}
\end{definition}

\begin{definition}[On cooperative and competitive search]
Search can involve single or multiple agents; for multiple agents it can be
\begin{itemize}
\itemsep=0.9pt
\item
{\em cooperative} (\$-calculus cost choice is used),
\item
{\em competitive} (\$-calculus adversary choice is used),
\item
{\em random} (\$-calculus general choice is used).
\end{itemize}
\end{definition}

\begin{definition}[On online and offline search]
Search can be
\begin{itemize}
\itemsep=0.9pt
\item
{\em offline} ($n = 0$, the complete solution is computed first and executed after without perception),
\item
{\em online} ($ n \neq 0$, action execution and computation are interleaved).
\end{itemize}
\end{definition}

\begin{definition}[On automatic problem solving as a multi-objective total minimization]
Given an objective cost function $\$ : A \times X \rightarrow  R$, $A$ is an algorithm operating on its input $X$ and $R$ set of real numbers, {\em automatic problem solving} can be understood as
as a multi-objective (total) minimization problem to find optimal  $a^* \in A_F$ and $x^* \in X_F$, $A_F \subseteq A$ are terminal states of the algorithm $A$, and $X_F \subseteq X$ are terminal states of the solution $X$ such that
\[
   \$(a^*,x^*) = min\{\$_1(\$_2(a),\$_3(x)), a \in A, x \in X\}                     \]
     where $\$_3$ is a problem-specific cost function, $\$_2$ is a search algorithm cost function, and $\$_1$ is an aggregating function combining $\$_2$ and $\$_3$.
\end{definition}

We will skip analogous definitions for optimization and search optimization problems.

\begin{definition}[On special cases of optimization problems]

\begin{itemize}
\itemsep=0.9pt
\item If $\$_1$ becomes  an identity function we obtain the {\em Pareto optimality} keeping objectives separate,
\item {\em the optimization problem} (the best quality solutions) - $\$_2$ is not used,
\item {\em the search optimization problem} (the minimal search costs) - $\$_3$ is not used,
\item {\em the total optimization problem} (the best quality solutions with minimal search costs) - uses both $\$_1$, $\$_2$ and $\$_3$.
\end{itemize}
\end{definition}

More details on the $k\Omega$-search, including inference rules of the Labeled Transition System, observation and strong
bisimulations and congruences, sufficient conditions
to solve optimization, search optimization and total optimization problems can be found in \cite{eber07}.

So far, A*, Minimax, ID3, Dining Philosophers, single/multiple local/global sequence alignment and TSP has been encoded as various instances of the same $k\Omega$-search.
We outlined how to encode as special cases of $k\Omega$-optimization: BFS, DFS, Expectiminimax, alpha-beta search, hill climbing, simulated annealing, evolutionary algorithms, tabu search, dynamic programming, swarm intelligence, neural networks, machine learning.
We demonstrated also how to encode in \$-calculus: cellular automata, random automata, neural networks, $\lambda$-calculus,   $\pi$-calculus.

\medskip
The details of implementations and applications,
can be found in \cite{eber05a,eber07}.

\subsection{\$-Calculus research, implementation and applications}

The \$-calculus research encompasses both theoretical, implementation and applications aspects. The core should be understood as a common link spanning a theory for automatic problem solving with rich applications.\vspace*{2mm}

\begin{center}
\scalebox{0.98}{
\begin{picture}(400,300)(10,-70)
\multiput(0,185)(90,0){4}{\framebox(80,40)}
\put(-6,180){\framebox(362,50)}
\put(0,185){\makebox(80,40){\shortstack{{ universality,}\\{ expressiveness,} \\{ feasibility}}}}
\put(90,185){\makebox(80,40){\shortstack{{ cost metrics}\\{ completeness}\\{ \& profiling}}}}
\put(180,185){\makebox(80,40){\shortstack{{ meta-search}\\{ algorithms}\\{ completeness}}}}
\put(270,185){\makebox(80,40){\shortstack{{ cooperation}\\{ \& competition,}\\{ uncertainty}}}}
\put(370,180){\makebox(40,40)[l]{\shortstack{{COMMON} \\{ THEORY}}}}
\multiput(20,100)(110,0){3}{\framebox(95,40)}
\put(14,95){\framebox(325,50)}
\put(370,100){\makebox(40,40)[l]{\shortstack{{ CORE} \\{ IMPLEMEN-} \\{TATION}}}}
\put(20,100){\makebox(95,40){\shortstack{{ CO\$T}  \\{\small {  compiler+interpreter}}\\{ (impl.:
flex+bison)}}}}
\put(130,100){\makebox(95,40){\shortstack{{ \$-Ruby}\\{ (implementation}\\{ in Ruby)}}}}
\put(240,100){\makebox(95,40){\shortstack{{ \$-Python}\\{ (implementation}\\{ in Python)}}}}
\put(370,0){\makebox(40,40)[l]{\shortstack{{ APPLICA-}\\{ TIONS}}}}
\multiput(0,20)(90,0){4}{\framebox(80,40)}
\put(0,20){\makebox(80,40){\shortstack{{{\small  machine learning}}\\{ \& data mining}}}}
\put(90,20){\makebox(80,40){\shortstack{{mobile}\\{ robotics}}}}
\put(180,20){\makebox(80,40){\shortstack{{ Watson-like}\\{ decision system}}}}
\put(270,20){\makebox(80,40){\shortstack{{ bioinformatics}}}}
\multiput(0,-30)(90,0){4}{\framebox(80,40)}
\put(0,-30){\makebox(80,40){\shortstack{{ automatic}\\{ programming}}}}
\put(90,-30){\makebox(80,40){\shortstack{{ big data \&}\\{\small{ cloud computing}}}}}
\put(180,-30){\makebox(80,40){\shortstack{{ neural networks}\\{ \& deep NNs}}}}
\put(270,-30){\makebox(80,40){\shortstack{{ evolutionary}\\{ computation}}}}
\put(-6,-34){\framebox(362,99)}
\put(177,180){\vector(0,-1){35}}
\put(177,180){\vector(-1,-2){17}}
\put(177,180){\vector(1,-2){17}}
\put(177,95){\vector(0,-1){30}}
\put(177,95){\vector(-1,-2){15}}
\put(177,95){\vector(1,-2){15}}
\put(177,95){\vector(-3,-2){45}}
\put(177,95){\vector(3,-2){45}}
\end{picture} }
\end{center}
\begin{center}\vspace*{-13mm}
Fig.2: Common Theory, Core Implementation and Applications\vspace*{3mm}
\end{center}

\begin{description}
\item{\bf Common Theory:}
\$-calculus as a universal model of computation and its expressiveness,
completeness of cost functions and its library,
cost function profiling and reinforcement  learning,
universality of the $k\Omega$-meta-search algorithm and meta-level control library,
cooperation and competition of multi-agent systems,
dealing with incomplete and uncertain knowledge.
\item{\bf Core Implementation:}
the CO\$T general-purpose programming language with compiler and interpreter, or
the \$-Ruby or \$-Python DSL library languages.
\item{\bf Applications:}
machine learning and data mining,
mobile robotics,
cost-driven supercomputer architectures,
cost languages paradigm,
Watson-like decision system,
bioinformatics,
automatic programming,
big data and cloud computing,
cyber security,
neural networks and deep NNs,
evolutionary computation,
compiler optimization.
\end{description}

Note that the precursor of \$-calculus - the CSA (Calculus of Self-modifiable Algorithms) SEMAL cost language \cite{eber94,horree94}, and the NAVY specialized Common Control Language \cite{buzzell05,duarte03} and Generic Behavior Message-passing Language (GBML) \cite{eber99} for communication and control of underwater autonomous robots have been implemented. The general-purpose interpreted and compiled multi-agent CO\$T language \cite{eber04}  has been designed only (with JavaCC or flex and bison for potential implementation), and two implementations of \$-cala (based on Scala) \cite{ansari13} and \$-CalcuLisp (based on Common Lisp) \cite{smith13} DSL languages were preliminary steps in implementation of $k\Omega$-search to design and implement full-blown DSL library languages \$-Ruby and \$-Python. Currently they are not implemented yet. Both \$-Ruby and \$-Python supposed to be implemented as Ruby or Python libraries, respectively, in the style of Ruby's Gems or Python's Keras, Pandas, NumPy, TensorFlow, Scikit Learn, Eli5, PyTorch, SciPy, LightGBM, Theano libraries.
Note that for implementation of \$-calculus is sufficient to implement only CO\$T compiler and interpreter, or \$-Python or \$-Ruby, e.g., in the style of Python or Ruby libraries This would allow to experiment more freely with the hypercomputational features of \$-calculus for the solutions of intractable and undecidable problems.

\section{On expressiveness of \$-calculus}

The expressiveness, i.e., the class of languages accepted by \$-calculus subsumes the expressiveness of
Turing Machines, i.e., it is straightforward (using user defined function and its call) to show how to encode in a straightforward way  $\lambda$-calculus
\cite{church36} in \$-calculus \cite{eber07}.

In fact, we will prove that \$-calculus is equally expressive as Evolutionary Automata and Interaction Machines.

To deal with undecidability,
the \$-calculus uses all three principles (not only infinity) from introductory section:
the  infinity, interaction, and evolution
principles:
the infinity -
because of the infinity of the indexing set $I$ in the \$-calculus operators,
the interaction - because
if to assume that simple \$-expressions
may represent oracles, then  we define an equivalent
to the oracle a user defined simple \$-expression,
that somehow in the manner of the ``black-box'' solves
unsolvable problems (however, we do not know how), and
the evolution - because
the $k\Omega$-optimization may be evolved to a new (and hopefully) more powerful
problem solving method (in update phase, $k\Omega$-optimization can update dynamically its components)

It is easy to think about approximate implementation of unbounded
(infinitary) concepts, than  about implementation of oracles.
This is so, because it is happening at the level of modeling and not physical implementation in a similar way as Turing allowed an infinite tape in his TM. Of course, at any moment of time
any computer has a finite memory likewise Internet consists of a finite number of nodes.
 It is not clear how to implement oracles (as Turing stated
{\em an oracle cannot be a machine}, i.e., implementable by mechanical means classical recursive algorithm),
and as the result,
the models based on them. One of potential implementation of oracles could be an
infinite lookup table with all results of the decision for the machine \cite{kozen97} stored. The quite
different story is how to initialize such infinite lookup table and how to search
it effectively
using for instance hashing or specialized data structures (again model versus physical implementation).

\begin{theorem}[On expressiveness of \$-calculus - version I]
The terminal language for \$-calculus agents coincides with the class of all languages in the alphabet~$X$.
\end{theorem}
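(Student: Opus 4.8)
The plan is to reduce Theorem~4.1 to the already-established results for Evolutionary Automata (Theorem~2.16), rather than re-proving everything from scratch. The key observation is that the $k\Omega$-optimization meta-search procedure, by construction, works in time generations $t=0,1,2,\ldots$ and at each generation carries forward a pair $(k\Omega_i[t], x_i[t])$; this is structurally the same ``infinite chain of abstract automata inheriting information from their predecessors'' that underlies the proof for Evolutionary Automata and Interaction Machines. So the first step is to exhibit, for an arbitrary language $L \subseteq X^*$, a \$-calculus agent whose terminal language (in the sense of Definition~2.13--2.14) is exactly $L$.

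First I would fix an arbitrary language $L \subseteq X^*$ and an enumeration $w_0, w_1, w_2, \ldots$ of $X^*$. Then I would build the agent as follows: use the $gp_i=1$ mode so that the $\mathit{sel}$ phase generates successive \$-expression states by ``crossover/mutation'' operations on LTS trees, and design these so that the state $x_i[t]$ at generation $t$ encodes the finite set $\{\, w_j : j \le t,\ w_j \in L \,\}$ together with the index $t$. Since $L$ is an arbitrary fixed (not necessarily computable) object, membership of $w_j$ in $L$ is supplied exactly the way the excerpt licenses it for undecidable problems: either through a user-defined simple \$-expression playing the role of an oracle ``black box'' (the interaction principle), or equivalently through an infinite lookup table used to initialize costs/definitions in the $\mathit{init}$ phase (the infinity principle) — both are explicitly sanctioned in Section~4. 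The $\mathit{goal}$ phase is set so that, on input $w$, the agent accepts at generation $t$ precisely when $w = w_t$ and $w_t \in L$; with $k_i=n_i=\infty$ defaults this means: given input $w$, there is a number $n$ (namely the index of $w$ in the enumeration) such that $A[n]$ reaches an accepting state iff $w \in L$, which is exactly the terminal-mode acceptance condition of Definition~2.12. Conversely, if $w \notin L$, no generation ever accepts, so $w \notin TL(E)$. This gives $TL(E) = L$.

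The second step, for rigor, is the routine direction: every terminal language of a \$-calculus agent is \emph{some} language over $X$, so the class of terminal languages is contained in $2^{X^*}$; combined with the first step this yields equality. This direction is immediate and needs no work. Having $TL(\text{\$-calculus}) = 2^{X^*} = TL(\text{Evolutionary Automata})$ then also records the promised equi-expressiveness with Evolutionary Automata and Interaction Machines.

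The main obstacle I anticipate is not logical but definitional bookkeeping: one must check that the $k\Omega$-procedure of Definition~3.8, with the specific default flags and phase bodies spelled out there, genuinely supports the construction above — in particular that the $gp=1$ generation of successive LTS states can be arranged to thread the growing finite approximation of $L$ through the generations, and that the bounded ``quantum of time'' in the $\mathit{goal}$ body does not obstruct eventual acceptance (it does not, because acceptance of $w$ happens at the single generation $t = \mathrm{index}(w)$, where only a finite check is performed). The only genuinely non-constructive ingredient is the oracle/lookup-table encoding of the arbitrary language $L$, and this is precisely the ``power and magic of the black box'' that the paper has already granted itself for o-machines and that it explicitly transfers to user-defined simple \$-expressions; so the theorem follows by the same template as Theorem~2.16, with the \$-calculus-specific verifications confined to the phase definitions just mentioned.
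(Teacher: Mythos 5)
Your proposal is sound at the level of rigor the paper itself operates at, and the reverse inclusion and the terminal-mode bookkeeping are handled correctly; but it is not the route the paper takes for this theorem --- it is essentially the paper's \emph{version II} argument (Theorem on expressiveness, version II) transplanted into the generational machinery of the $k\Omega$-loop. The paper's version I proof uses no oracle at all: for each word $w \in L$ it builds a trivial agent $A_w$ (a finite automaton) that accepts exactly $w$ and passes any other input along to its successor, and then forms the infinite parallel composition $(\pa_i\, A_i)$ over the family $\{A_w : w \in L\}$; the arbitrary (possibly non-r.e.) language is encoded entirely in \emph{which agents populate the infinite composition}, i.e., in the non-effective choice of the indexing set, so every individual component remains a perfectly effective finite automaton and only the infinity principle is invoked. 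Your proof instead concentrates all the non-computability in a single black-box simple \$-expression deciding $w_j \in L$ (the interaction principle), which the paper does license in Section 4 but deliberately reserves for the second version of the theorem. What each approach buys: the paper's construction is the cleaner demonstration that infinite composition alone already yields full expressiveness, with no per-component super-Turing power assumed; yours needs less structural argument about chaining an infinite family of agents, at the price of assuming an oracle, and it also carries an unnecessary complication --- threading the growing finite set $\{w_j : j \le t,\ w_j \in L\}$ through the state is not needed, since acceptance at generation $t$ only requires the single check $w = w_t$ and $w_t \in L$. Since Theorems 4.1 and 4.2 assert the same statement, your argument does prove the claim, but as a proof of version I it duplicates version II rather than capturing the infinity-only construction the paper intends here.
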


\begin{proof}
We generally repeat the proof for Evolutionary Automata or Interaction Machines.
We construct an infinite parallel composition - \$-calculus
\$-expression $(\pa_i \; A_i)$, where $A_i$ represents the $i$-th agent, and put to it an arbitrary input string $w$.
If $(\pa_i \; A_i)$  accepts $w$ (i.e., one of $A_i$ accepts) then \$-calculus accepts. If no $A_i$
accepts then $L_d$ does not accept either.
We show that given a formal language $L$, i.e., a set of finite words in the alphabet $X$, there is a \$-calculus set of agents $A$ such that $A$ accepts $L$. We assume that all components/agents
of $A = \{A[t]; t = 0, 1, 2, 3,\ldots\}$ work in sequential order corresponding to their indices in generations $t = 0, 1, 2, 3,\ldots$. Each agent of $A$ takes as input a string and
produces as output a string to pass it to the successor (if it does not accept) or stops the computation if the string is accepted.
To do this, for each word $w$, we build an agent (e.g., in the form of the finite automaton) $A_w$ that given a word $w$ as its input, it accepts only the word $w$, and given any other word $u$, it outputs $u$,
which goes as input to the next agent in the \$-calculus $A$.
In such a way an agent interacts with its predecessor and its successor. In both cases, the agent $A_w$ comes to a terminal state. Then taking any sequence
$E = \{A[t] = A_w ; w \in L\}$  of such set of agents, we obtain the necessary \$-calculus $A$.
\end{proof}

\begin{theorem}[On expressiveness of \$-calculus - version II]
The terminal language for \$-calculus agents coincides with the class of all languages in the alphabet~$X$.
\end{theorem}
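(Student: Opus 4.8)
The plan is to give an alternative proof that routes through the already-established result on Evolutionary Automata and Interaction Machines (Theorem~2.18) rather than through the direct $A_w$-chaining used in version~I. I would argue two inclusions. The easy direction, $TL(\text{\$-calculus}) \subseteq \{\text{all languages over } X\}$, is immediate from Definitions~2.15--2.17: a terminal language is by definition a set of finite words over $X$, so there is nothing to prove. The substantive direction, $\{\text{all languages over } X\} \subseteq TL(\text{\$-calculus})$, I would obtain by exhibiting a uniform encoding of an arbitrary Evolutionary Automaton $E = \{A[t] : t = 0,1,2,\dots\}$ as a \$-calculus agent, and then invoking Theorem~2.18: since every language over $X$ equals $TL(E)$ for some $E$, it will also be the terminal language of the corresponding \$-calculus agent, whence the two classes coincide. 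As a byproduct this yields the promised statement that \$-calculus is exactly as expressive as Evolutionary Automata and Interaction Machines.

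For the encoding I would use the evolution principle explicitly. Identify the generation index $t$ of $E$ with the time-generation counter $t$ of the $k\Omega$-search; represent each component automaton $A[t]$ by a user-defined (simple or complex) \$-expression over the agent's alphabet $A_i$; feed the input word $w$ on $user\_channel$ in the $init$ phase; and run the $k\Omega$-search in a degenerate reactive mode (e.g. $k_i = b_i = 0$, $n_i \neq 0$) so that the $loop$ body simply executes the current component $A[t]$ on the current string $x_i[t]$ without any pruning or optimization. Set $goal_i$ so that it recognizes the accepting states of $A[t]$: if $A[t]$ accepts, the agent halts in a terminal state, matching the terminal mode of Definition~2.15; otherwise $A[t]$ emits the (possibly rewritten) string, which is passed along as $x_i[t+1]$, while the $exec/update$ phase — exercising exactly the clause of Definition~3.3 that permits the user to overwrite $k\Omega[t]$ and its subexpression definitions — installs the definition of $A[t+1]$ for the next cycle. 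Hence the \$-calculus agent comes to an accepting state on $w$ iff some $A[n]$ does, i.e. its terminal language is precisely $TL(E)$. (A shorter route using the interaction principle instead would posit a single user-defined ``oracle'' simple \$-expression $a_L$ with $(a_L\; w)$ evaluating to accept iff $w \in L$, and take the agent $(\konk\; (\leftarrow c\; X)\; (a_L\; X))$, whose terminal language is $L$; but this merely hides the work inside the oracle rather than exhibiting the generation structure, so I would only mention it as a remark.)

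The main obstacle is making the simulation faithful rather than hand-wavy. One must verify that (i) in the chosen degenerate regime the three phases \emph{select}/\emph{examine}/\emph{execute} do not alter acceptance — no invisible-$\varepsilon$ replacement, pruning, or cost minimization changes whether $A[t]$ reaches an accepting state; (ii) the update step is genuinely licensed to replace the current component's definition by that of $A[t+1]$, which requires pointing precisely at the overwriting clause in the Execute phase of Definition~3.3 and checking the bookkeeping is consistent across iterations; and (iii) each $A[t]$ inherits from $A[t-1]$ exactly the information it needs (here just the current string and the successor's definition), which is the same inheritance condition that made the proofs for Evolutionary Automata and Interaction Machines work in \cite{burgin12,wegner12}. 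Once these three points are discharged, the inclusion and hence the equality follow.
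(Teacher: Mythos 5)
Your argument is correct in the same informal sense as the paper's own proofs, but it takes a genuinely different route. The paper's version~II proof is very short: it remarks that one \emph{could} let atomic \$-expressions stand for black-box oracles accepting arbitrary languages (as assumed for o-machines and Site/Internet Machines), but instead deliberately weakens the oracles to finite automata, takes an infinite set of such automata as the agents, and then simply defers to the version~I construction --- i.e.\ the infinite parallel composition $(\pa_i\,A_w)$ of word-specific acceptors that pass non-accepted strings to their successors. You instead treat Theorem~2.18 as a black box (every language over $X$ is $TL(E)$ for some Evolutionary Automaton $E$) and discharge the theorem by exhibiting an embedding of an arbitrary $E$ into a \emph{single} agent's $k\Omega$-search, using the time-generation counter, a degenerate reactive regime, and the overwrite clause of the Execute/update phase to install $A[t+1]$ each cycle --- that is, you realize the construction via the evolution principle where the paper realizes it via the infinity principle (an unbounded indexing set in the parallel composition). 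What your route buys is modularity and an explicit account of which pieces of Definition~3.3 actually carry the simulation, together with an honest list of the faithfulness obligations (no $\varepsilon$-masking or pruning may affect acceptance, the update clause must really license redefinition, and each generation must inherit the current string); the paper's route buys brevity and avoids any reliance on the self-modification machinery. One point you should make explicit: your single-agent simulation needs the countable sequence of component definitions $A[0],A[1],\dots$ to be available across generations, which is covered by the $init$ phase receiving countably many $X_i$ over $user\_channel$ (the indexing set $I$ may be countably infinite), so you are in the end still invoking the infinity principle, just in a different place. Your parenthetical oracle remark is, in fact, closer to what the paper's version~II actually says than your main argument is.
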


\begin{proof} We could assume that oracles can accept aribitrary languages over alphabet $X$ (the same assumption as was done for Turing o-machines and Site and Internet Machines. We could assume further that \$-calculus atomic \$-expressions may represent black box oracles, i.e., can accept any set of strings in the alphabet $X$. However, we will do something simpler instead. Let's asssume very primitive oracles of the class of finite automata, and let an infinite set of oracles correspond to \$-calculus agents. After that the proof is like in previous theorem.
\end{proof}

\section{On completeness of cost  metrics}

A unique aspect of the \$-calculus, compared to other models of computation, is its associated cost mechanism. A standard cost function is predefined in \$-calculus using either probabilities, fuzzy sets or rough sets membership functions to express uncertainties. The user is free to define own cost functions.

We try to address an open research question what is the minimal and complete set of cost function metrics? It will be desirable to establish a library of most common cost metrics spanning as many applications as possible. Such a standardization attempt should benefit also the field of evolutionary computation or autonomous agents.

\begin{definition}[On o-completeness of cost functions]
The set of cost functions is open-complete (o-complete) if we can expand it arbitrarily by adding or modifying cost functions.
\end{definition}

\begin{theorem}[On o-completeness of the \$-calculus standard cost function]
The \$-calculus standard cost function is o-complete.
\end{theorem}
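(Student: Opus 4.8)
The plan is to unwind the definition of \emph{o-completeness} and to exhibit, inside the $k\Omega$-optimization meta-search procedure already described, the mechanisms that make the standard cost function arbitrarily extensible. First I would recall that in \$-calculus everything is a \$-expression, cost functions included: the performance measure $\$_i$ decomposes into an aggregating function $\$_{i1}$, a search-cost function $\$_{i2}$ and a solution-specific cost function $\$_{i3}$, each of which is itself a simple or composite \$-expression, and the \emph{standard} cost function is merely the default instantiation (addition for $\$_{i1}$; the sum/min/max/average recurrences of the standard cost function definition for $\$_{i2}$ and $\$_{i3}$; costs of atomic \$-expressions either supplied by the user or initialized randomly and tuned by reinforcement learning).

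Second, I would point out that the $init$ phase explicitly receives from the user, over the user channel, replacement values for $\$_{i1}$, $\$_{i2}$, $\$_{i3}$ and for the costs of simple \$-expressions, overwriting the defaults, while the $update\_loop$ at the end of the $exec$ phase is permitted to rewrite the costs of simple \$-expressions, the aggregating function, and in fact every component of $k\Omega[t]$. Hence any new cost function is \emph{added} by defining a fresh \$-expression with the desired body and passing it in place of the standard one, and any cost function already in use is \emph{modified} through the update hook. This is exactly the closure property demanded by the definition of o-completeness --- the family of cost functions realizable in \$-calculus is closed under arbitrary addition and modification --- with the standard cost function sitting inside that family as the distinguished default, so the theorem follows once these two interfaces are identified.

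Third, so that ``arbitrarily'' is not read too narrowly, I would observe that the extension is genuinely unrestricted: by the same black-box convention used in the version-II expressiveness theorem, a user-defined simple \$-expression may stand for an oracle, so the cost of an atomic action need not even be recursively computable, and composing such atoms through user-chosen aggregators yields cost functions of unbounded complexity; conversely every cost function satisfying the von Neumann--Morgenstern utility axioms is still expressible, being a real-valued function definable by a \$-expression over the chosen alphabet. The honest caveat I would state is that o-completeness is the weakest natural completeness notion for a cost library: it asserts extensibility, not the existence of a \emph{finite minimal} complete basis, which the paper itself leaves open.

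The main obstacle is not a deep calculation but a matter of fixing definitions precisely: one must (a) settle a reading of ``arbitrarily'' under which the statement is neither vacuous nor false, and (b) verify that the $init$ and $update\_loop$ interfaces really do permit unconstrained replacement of $\$_{i1}$, $\$_{i2}$, $\$_{i3}$ without the remaining $k\Omega$ semantics --- the select/examine pruning and the congruence conventions for $\varepsilon$ --- silently restricting which cost functions are admissible.
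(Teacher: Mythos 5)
Your proposal is correct and follows the same route as the paper, which simply asserts that the result ``follows directly from the definition of o-completeness and the definition of the standard cost function.'' In fact your version is more substantive than the paper's one-line (and arguably circular) proof, since you actually identify the concrete interfaces --- the $init$ phase receiving replacement $\$_{i1}$, $\$_{i2}$, $\$_{i3}$ over the user channel and the $update\_loop$ rewriting costs --- that witness the open extensibility the definition demands, and you correctly flag that o-completeness is a weak property leaving the finite-minimal-basis question open, exactly as the paper does after the theorem.
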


\begin{proof}
The proof follows directly from the definition of o-completeness and the definition of the standard cost function in \$-calulus that is o-complete.
\end{proof}

Unfortunately, the openess of the standard cost function in \$-calculus causes that it is not minimal, and for sure, not finite.

Thus the problem remains whether is possible to define a FINITE minimal and complete set of cost functions, i.e., whether is possible to define better and more general axioms for costs than are provided, for instance, by the utility theory (e.g., in a style of Kolmogorov's axioms from the probability theory \cite{russell95}).

Note also that the cost functions should be expanded to deal better with infinite cases, i.e., if the cost of an infinite path is finite, i.e., forming a convergent series, we have a finite measure to guide the search, however, if the cost is infinite, we have only one value of infinity that is not useful in the search. In other words, cost functions should express different types of infinity.

\medskip
Note also that the cost function profiling is a crucial problem for applicability of the \$-calculus, i.e., how to establish the way to define costs of atomic cost expressions (an analogue of the elementary probabilities distributions). Resource-bounded optimization typically assumes that statistics from previous experiments are used to develop profiles of cost functions. Reinforcement learning can be used to derive costs of atomic and composite cost expressions. We can employ directly Q-learning or Temporal Difference learning \cite{russell95}. The advantage of reinforcement learning is that it suffices for an agent who very often does not know (or is unable to find) how to model other agents and/or an environment. Then the costs can be learned directly from reward feedback interactions.

\section{On completeness of  meta-search algorithms}

The \$-calculus uses the $k\Omega$-optimization for meta-level control, which is a generalization of control structures used for adaptive expert systems, machine learning, evolutionary computation and neural nets. Truly, this is an approximation of the non-existing universal search algorithm. However, in a similar way as an evolutionary algorithm may lead to a more complex form like the cultural, or genocop algorithm \cite{russell95}, the \$-calculus allows for an arbitrary new meta-control. Can we use an ``imperfect'' (in the sense that it can be improved, i.e., not being optimal) the $k\Omega$-optimization to obtain a better version of itself (i.e., something in the style of bootstrapping compilers allowing to obtain better versions of themselves)?
In other words, is $k\Omega$-optimization complete?

\medskip
We are getting a weaker result that the $k\Omega$-optimization can be indefinitely improved.

\begin{definition}[On o-completeness of meta-search algorithms]
The set of meta-search algorithms is open-complete (o-complete) if we can expand it arbitrarily by adding or modifying meta-search algorithms.
\end{definition}

\begin{theorem}[On o-completeness of the $k\Omega$-optimization meta-search algorithm]
The \$-calculus $k\Omega$-optimization meta-search algorithm is o-complete.
\end{theorem}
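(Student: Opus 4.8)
The plan is to mirror the structure of Theorem 5.2 (the o-completeness of the standard cost function) almost verbatim, since o-completeness is a definitional notion rather than a deep combinatorial fact. First I would invoke Definition 6.1: a set of meta-search algorithms is o-complete precisely when it can be expanded arbitrarily by adding or modifying meta-search algorithms. So the entire burden is to exhibit, for the singleton (or library) containing the $k\Omega$-optimization, a mechanism by which an arbitrary new meta-search algorithm can be added or an existing one modified.

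Second, I would point to the two phases of the $k\Omega$-search procedure where exactly this happens — the $init$ phase and the $exec/update$ phase — as already described in Definition 3.8. In the $init$ phase the user may overwrite the default $k\Omega_i[0]$ with an arbitrary complex \$-expression, i.e.\ supply a completely different meta-search algorithm; this is the ``add'' clause of o-completeness. In the $exec/update$ phase, when the $update$ flag is set, the procedure either accepts user-supplied modifications (including a wholesale replacement of $k\Omega[t]$) or self-modifies its control parameters $k_i,n_i,b_i$ and, potentially, its $goal$, $sel$, $exam$, $exec$ and \$ components; this is the ``modify'' clause. I would then cite Theorem 4.1/4.2 to note that the ambient \$-calculus is expressive enough that any such candidate meta-search algorithm is itself representable as a \$-expression, so the expansion never leaves the model.

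Third, I would close exactly as Theorem 5.2 does: the conclusion follows directly from Definition 6.1 together with the fact that the $k\Omega$-optimization, by construction (Definition 3.8), admits arbitrary augmentation and self-modification of its meta-control, hence the set of \$-calculus meta-search algorithms generated from it satisfies the o-completeness condition. I would also append the standing caveat, paralleling the remark after Theorem 5.2, that o-completeness is a weak form of completeness: it guarantees only that $k\Omega$-optimization can be \emph{indefinitely improved}, not that any finite minimal universal meta-search algorithm exists — the latter being the genuinely hard open problem, analogous to the existence of a finite minimal complete cost-function library.

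The main obstacle is not mathematical difficulty but making sure the argument is not circular or vacuous: one must be careful that ``o-complete'' is genuinely witnessed by the $init$/$update$ machinery rather than being trivially true of every set. I would address this by explicitly exhibiting the user-channel receive in $init\_body$ and the $update\_loop$ in $exec\_body$ as the concrete syntactic hooks that perform the addition/modification, so the proof is grounded in Definition 3.8 rather than merely asserting the definitional property.
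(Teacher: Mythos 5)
Your proposal is correct in the sense that it establishes the (admittedly weak, essentially definitional) property claimed, but it takes a genuinely different route from the paper. The paper's own proof is a bootstrapping self-application argument: it feeds the $k\Omega$-optimization to itself as input and argues that, \emph{provided the sufficient conditions for optimization hold} (completeness of search, elitist selection, no overestimation of the costs of the $\varepsilon$-masked parts of the tree), the meta-search monotonically improves toward minimal cost, a process that may continue indefinitely because the cost function may itself fluctuate; this indefinite improvability is what the paper identifies with o-completeness, matching the remark preceding the theorem that the $k\Omega$-optimization ``can be indefinitely improved.'' Your argument instead witnesses Definition 6.1 directly, exhibiting the user-channel receive in $init\_body$ and the $update\_loop$ in $exec\_body$ as the concrete syntactic hooks through which an arbitrary meta-search algorithm can be added or the existing one modified, and invoking the expressiveness theorems to ensure the expansion stays inside the model. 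Each approach buys something: yours is closer to the letter of the definition of o-completeness (arbitrary expandability), needs no extra hypotheses, and is better grounded in the syntax of Definition 3.8 --- it also honestly confronts the vacuity worry; the paper's version, by contrast, delivers the more interesting ``self-improvement'' reading but at the price of conditioning on the sufficient optimization conditions, a dependence your proof avoids entirely. Neither argument is more rigorous than the other given how informal the surrounding definitions are, so I would regard your proposal as an acceptable alternative proof rather than a reconstruction of the paper's.
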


\begin{proof}
The proof is the following. We put to the input of
the $k\Omega$-optimization the $k\Omega$-optimization itself. If sufficient conditions for optimization are satisified, i.e., completeness of search, elitist selection and no-overestimates of missing parts of the tree, the $k\Omega$-optimization will monotically improve reaching minimal costs. This process may be infinite because the cost function may fluctutae in the process as well. The above guarantees o-completeness of the $k\Omega$-optimization.
\end{proof}

Assuming only o-completeness of the $k\Omega$-optimization, the  work on population of the library of typical meta-control algorithms is needed.

Note that in \cite{eber21}, we proposed a similar approach to derive the optimal theory for machine learning. For example, if to encode various machine learning algorithms as \$-calculus examples in the form of \$-expressions and to use the $k\Omega$-optimization, we can induce automatically by learning from examples the optimal machine learning theory from the hypothesis space of all possible theories (at least hypothetically, and, hopefully, in practice too).

The open problem remains whether we can define (similar like for cost functions) a FINITE complete and minimal set of meta-search algorithms.

\section{Conclusions and future work}

In the paper we defined three complexity classes for TM undecidable problems: U-complete, D-complete and H-complete classes. Now, it's time to populate those classes in the style like Cook, Levin and Karp did with the NP-complete class. Note also that similar like for the NP-complete class, it is enough to prove a decidability of one U-complete, or D-complete or H-complete problem instance to break the whole hierarchical puzzle
of unsolvability. At least inside of those classes. However, we do not even know whether other undeciable outside of those classes exist.  For example, solving the halting problem of UTM is pivotal to solve many other Turing Machine unsolvable problems, including to decide
$L_{ne}$, nontrivial properties,
PCP, BBP and ECP.

We investigated also expressiveness of several super-Turing models of computation, concentrating mostly on \$-calculus expressiveness, and completeness of its cost function metrics and meta-search algorithms. It looks that \$-calculus may provide a unifying theory of machine learning \cite{eber21}.

\begin{quote}
Turing Machines can be considered as an attempt to create the {\em theory of everything} for computer science, whereas similar attempts of complete theories for physics (by Newton, Laplace, Einstein, Hawking), mathematics (by Hilbert, G\"{o}del, Church, Turing) or philosophy (by Aristotle, Plato, Hegel) have failed \cite{wegner12}.  If Turing machines were truly complete, computer science with its Turing machine model would be an exception from other sciences, and computer science together with its Turing machine model would be complete. If so, by reduction techniques, we could prove also completeness of mathematics (decision problem in mathematics \cite{whitehead10} - disproved by G\"{o}del, Church and Turing \cite{godel31,church36,turing36}), and completeness of physics, philosophy, medicine, economy and so on.
\end{quote}

Obviously, the \$-calculus and other super-Turing models of computation, although being more complete than Turing machines,  do not constitute the {\em theory of everything} either. The search for such complete theory of everything will be never ending process improving indefinitely our state of knowledge (assuming its o-completeness (open-completeness)).

We are aware that we only ``scratched'' the area of super-Turing computation and super-recursive algorithms in this paper. We believe that we will witness a new area of fruitful research for many years to come. Some promising hints can be seen already in \cite{aho11,denning11}, or algorithms that run forever \cite{kleinberg06} to achieve a more powerful and more complete theory of computation.

\subsubsection*{\bf Acknowledgments} The author would like to thank anonymous reviewers for their comments and criticism. In particular, comments by Reviewer 3 have been found to be the most useful and constructive.


\begin{thebibliography}{99}
\bibitem{aho11} Aho A.
 Computation and Computational Thinking, Ubiquity ACM symposium ``What is computation'', Ubiquity Magazine, Vol.2011, Issue January, Article no.1, Jan. 2011, doi:10.1145/1922681.1922682.

\bibitem{ansari13} Ansari B. \$-cala: An Embedded Programming Language Based on \$-Calculus and Scala, Master Thesis, Rensselaer Polytechnic Institute at Hartford, Dept.of Eng. and Science, 2013.

\bibitem{bringsjord12} Bringsjord SG, Naveen S, Eberbach E, Yang Y.
 Perhaps the Rigorous Modeling of Economic Phenomena Requires Hypercomputation, Int. Journal of Unconventional Computing,
 vol.8, no.1, 2012, 3-32.

\bibitem{burgin05} Burgin M. Super-Recursive Algorithms, Springer-Verlag, 2005.  doi:10.1007/b138114.

\bibitem{burgin12} Burgin M, Eberbach E.
 Evolutionary Automata: Expressiveness and Convergence of Evolutionary Computation, Computer Journal, 55(9), 2012, 1023-1029,
 doi:dx.doi.org/10.1093/comjnl/bxr099.

\bibitem{buzzell05} Buzzell Ch.
 A Common Control Language for Multiple Autonomous Undersea Vehicle Cooperation, Master Thesis, Univ. of Massachusetts
 Dartmouth, Intercampus Graduate School of Marine Sciences and Technology, 2005.

\bibitem{church36} Church A. An Unsolvable Problem of Elementary Number Theory,
American Journal of Mathematics, vol.58, 1936, 345-363.
URL \url{http://links.jstor.org/sici?sici=0002-9327%28193604%2958%3A2%3C345%3AAUPOEN%3E2.0.CO%3B2-1}

\bibitem{denning11} Denning P.
 What Have We Said About Computation? Closing Statement, Ubiquity ACM symposium ``What is computation'', Ubiquity Magazine, Vol.2011, Issue April, Article no.1, April 2011, doi:10.1145/1967045.1967046.

\bibitem{duarte03} Duarte Ch, Martel G, Eberbach E, Buzzell Ch.
 A Common Control Language for Dynamic Tasking of Multiple Autonomous Vehicles, Proc. of the 13th Intern. Symp. on Unmanned Untethered Submersible Technology UUST'03, Durham, NH, August 24-27, 2003.

\bibitem{eber94} Eberbach E.
 SEMAL: A Cost Language Based on the Calculus of Self-modifiable Algorithms, Intern. Journal of Software Engineering and Knowledge Engineering, vol.4, no.3, 1994, 391-408.    doi:10.1142/S0218194094000192.

\bibitem{eber99} Eberbach E, Phoha S.
 SAMON: Communication, Cooperation and Learning of Mobile Autonomous Robotic Agents, Proc. of the 11th IEEE Intern. Conf. on Tools with Artificial Intelligence ICTAI'99, Chicago, IL, 1999, 229-236. doi:10.1109/TAI.1999.809791.

\bibitem{eber03c} Eberbach E, Wegner P.
 Beyond Turing Machines, The Bulletin of the European Association for Theoretical Computer Science (EATCS Bulletin),
81, Oct. 2003, 279-304.

\bibitem{eber04a} Eberbach E, Goldin D, Wegner P.
 Turing's Ideas and Models of Computation, in: (ed. Ch.Teuscher) Alan Turing: Life and Legacy of a Great Thinker, Springer-Verlag, 2004, 159-194.
 doi:10.1007/978-3-662-05642-4\_7.

\bibitem{eber04} Eberbach E, Eberbach A.
 On Designing CO\$T: A New Approach and Programming Environment for Distributed Problem Solving Based on Evolutionary Computation and Anytime Algorithms, Proc. 2004 Congress on Evolutionary Computation CEC'2004, vol.2, Portland, Oregon, 2004, 1836-1843.

\bibitem{eber05a} Eberbach E.
 \$-Calculus of Bounded Rational Agents: Flexible Optimization as Search under Bounded Resources in Interactive Systems, Fundamenta Informaticae, vol.68, no.1-2, 2005, 47-102.

\bibitem{eber05b} Eberbach E.
 Toward a Theory of Evolutionary Computation, BioSystems, vol.82, no.1, 2005, 1-19.
 doi:10.1016/j.biosystems.2005.05.006.

\bibitem{eber07} Eberbach E.
 The \$-Calculus Process Algebra for Problem Solving: A Paradigmatic Shift in Handling Hard Computational Problems, Theoretical Computer Science, vol.383, no.2-3, 2007, 200-243. doi:10.1016/j.tcs.2007.04.012.

\bibitem{eber08} Eberbach E.
 Approximate Reasoning in the Algebra of Bounded Rational Agents, Intern. Journal of Approximate Reasoning, vol.49, issue 2, 2008, 316-330
 doi:10.1016/j.ijar.2006.09.014.

\bibitem{eber15} Eberbach E.
 On Hypercomputation, Universal and Diagonalization Complete Problems, Fundamenta Informaticae, IOS Press, 139 (4), 2015, 329-346,  doi:10.3233/FI-2015-1237.

\bibitem{eber21} Eberbach E, Strzalka D.
 In Search of Machine Learning Theory, Proc. Future Technologies Conference FTC'2021, Springer, Lect. Notes in Networks and Systems,
 Oct. 28-29, 2021, Vancouver, BC, \texttt{https://saiconference.com/FTC2021}.

\bibitem{eber22} Eberbach E.
 Undecidability and Complexity for Super-Turing Models of Computation, Proc. Conf. on Theoretical and Foundational Problems in Information Studies, IS4SI 2021, The 2021 Summit of the Intern. Society for the Study of Information, Sept. 12-19, 2021, Proceedings 2022, 81, 123, doi:10.3990/proceedings2022081123.

\bibitem{godel31} G\"{o}del K.
 \"{U}ber formal unentscheidbare S\"{a}tze der Principia Mathematica und verwander Systeme, Monatschefte f\"{u}r
 Mathematik und Physik, 38:173-198, 1931.

\bibitem{hopcroft01} Hopcroft JE, Motwani R, Ullman JD.
 Introduction to Automata Theory, Languages, and Computation, 3rd ed., Addison-Wesley, 2007. ISBN-13:9780321455369.

\bibitem{horree94} Horree S.
 Design and Implementation of a Version of SEMAL Interpreter, Honors Thesis, Jodrey School of Computer Science, Acadia University, 1994.

\bibitem{horvitz01} Horvitz E.
 Zilberstein S. (eds.), Computational Tradeoffs under Bounded Resources, Artificial Intelligence 126, 2001, 1-196.

\bibitem{kleinberg06} Kleinberg J, Tardos E.
 Algorithm Design, Pearson/Addison Wesley, 2006.  ISBN-13:9780137546350.

\bibitem{kozen97} Kozen DC. Automata and Computability, Springer-Verlag, 1997.

\bibitem{kuratowski77} Kuratowki K. Introduction to Set Theory and Topology, PWN, Warsaw, 1977.

\bibitem{milner89}  Milner R, Parrow J, Walker D.
 A Calculus of Mobile Processes, Rep. ECS-LFCS-89-85 and -86, Lab. for Foundations of Computer Science,
Computer Science Dept., Edinburgh Univ., 1989.

\bibitem{milner99} Milner R.
 Communicating and Mobile Systems: The $\pi$-calculus, Cambridge University Press, 1999.  ISBN:9780521658690.

\bibitem{pesonen11} Pesonen V. New Models of Computation, 2011,
\texttt{https://slideplayer.com/slide/11412947/}.


\bibitem{post46} Post E.  A Variant of a Recursively Unsolvable Problem, Bulletin of the AMS 52, 1946, 264-268.

\bibitem{rado62} Rado T.
 On Non-Computable Functions, Bell System Technical Journal, vol.41, no.3, 1962, 877-884.
  doi:10.1002/j.1538-7305.1962.tb00480.x.

\bibitem{rice53} Rice HG.
 Classes of Recursively Enumerable Sets and their Decision Problems, Trans. of the AMS 89, 1953, 25-59.
 doi:10.1090/S0002-9947-1953-0053041-6.

\bibitem{russell95}  Russell S, Norvig P.
 Artificial Intelligence: A Modern Approach, Prentice-Hall, 3rd ed., 2010.
 ISBN-10:0136042597, 13:978-0136042594.

\bibitem{smith13} Smith J.
 \$-CalcuLisp, an Implementation of \$-Calculus Process Algebra in Common Lisp, Master Project, Rensselaer Polytechnic Institute at Hartford,
 Dept.of Eng. and Science, 2013.

 \bibitem{syropoulos08} Syropoulos A.
  Hypercomputation: Computing Beyond the Church-Turing Barrier, Springer-Verlag, 2008.   doi:10.1007/978-0-387-49970-3.

\bibitem{turing36} Turing A.
 On Computable Numbers, with an Application to the Entscheidungsproblem, Proc. London Math. Soc., 42-2, 1936, 230-265;
 A correction, ibid, 43, 1937, 544-546.    doi:10.1112/plms/s2-42.1.230.

\bibitem{turing39} Turing A.
 Systems of Logic based on Ordinals, Proc. London Math. Soc. Series 2, 45, 1939, 161-228.  doi:10.1112/plms/s2-45.1.161.

\bibitem{turing48} Turing, A.
Intelligent Machinery, 1948, in Collected Works of A.M. Turing: Mechanical Intelligence, ed.D.C.Ince, Elsevier Science, 1992.

\bibitem{wegner12} Wegner P, Eberbach E, Burgin M.
 Computational Completeness of Interaction Machines and Turing Machines, Proc. The Turing Centenary Conference, Turing-100,
 Alan Turing Centenary, EasyChair Proc. in Computing, EPiC vol. 10 (ed. A. Voronkov), Manchester, UK, June 2012, 405-414.

\bibitem{whitehead10} Whitehead AN, Russell B. Principia Mathematica, vol.1, 1910, ISBN-10:1603861823, 13:978-1603861823.
vol.2, 1912,  \texttt{https://www.slideshare.net/Kadio3/insurance-250661884},
vol.3, 1913, Cambridge Univ. Press, London.
\end{thebibliography}
\end{document}